\newcommand{\bega}{\begin{gather}}
\newcommand{\eega}{\end{gather}}
\newcommand{\bi}{\begin{itemize}}
\newcommand{\ei}{\end{itemize}}
\newcommand{\ben}{\begin{enumerate}}
\newcommand{\een}{\end{enumerate}}
\newcommand{\bca}{\begin{cases}}
\newcommand{\eca}{\end{cases}}
\newcommand{\bln}{\begin{align}}
\newcommand{\eln}{\end{align}}
\newcommand{\bst}{\begin{split}}
\newcommand{\est}{\end{split}}
\def\ie{\begin{equation}\begin{aligned}}
\def\fe{\end{aligned}\end{equation}}
\newcommand{\bma}{\le(\begin{matrix}}
\newcommand{\ema}{\end{matrix}\ri)}
\newcommand{\field}[1]{\ensuremath{\mathbb{#1}}}
\newcommand{\RR}{\field{R}}
\newcommand\sA{{\ensuremath{{\mathcal A}}}}
\newcommand\sB{{\ensuremath{{\mathcal B}}}}
\newcommand\sC{{\ensuremath{{\mathcal C}}}}
\newcommand\sH{{\ensuremath{{\mathcal H}}}}
\newcommand\sK{{\ensuremath{{\mathcal K}}}}
\newcommand\sO{{\ensuremath{{\mathcal O}}}}
\newcommand\sW{{\mathcal W}}
\newcommand\sX{{\mathcal X}}
\newcommand{\bid}{\mathbf{1}}
\preprint{MIT-CTP/5607}
\title{Algebraic ER=EPR and Complexity Transfer}
\author{Netta Engelhardt}
\author{and Hong Liu}
\affiliation{Center for Theoretical Physics, Massachusetts Institute of Technology, \\Cambridge, MA 02139, USA}
\emailAdd{engeln@mit.edu}
\emailAdd{hong\_liu@mit.edu}
\abstract{We propose an algebraic definition of ER=EPR in the $G_N \to 0$ limit, which associates bulk spacetime connectivity/disconnectivity to the operator algebraic structure of a quantum gravity system. The new formulation not only includes information on the amount of entanglement, but also more importantly {\it the structure of entanglement}. We give an independent definition of a quantum wormhole as part of the proposal.
This algebraic version of ER=EPR sheds light on a recent puzzle regarding spacetime disconnectivity in holographic systems with ${\cal O}(1/G_{N})$ entanglement. We discuss the emergence of quantum connectivity in the context of black hole evaporation and further argue that at the Page time, the black hole-radiation system undergoes a transition involving the transfer of an emergent type III$_{1}$ subalgebra of high complexity operators from the black hole to radiation. We argue this is a general phenomenon that occurs whenever there is an exchange of dominance between two competing quantum extremal surfaces.}
\begin{document}

\maketitle

\section{\label{sec:intro}Introduction}

Developments in AdS/CFT over the past two decades have revealed profound connections between the structure of spacetime and entanglement (see~\cite{RyuTak06, HubRan07, Van09, Van10, CzeKar12, MalSus13, Van13, VerVer13a, FauLew13, EngWal14, AlmDon14, DonHar16, CotHay17} among others). A particularly elegant and powerful proposal that highlights such connections is ER=EPR~\cite{MalSus13, Van13, VerVer13a} -- the hypothesis that any 
entangled quantum gravitational systems are connected by some kind of Einstein-Rosen (ER) bridge.

The prototypical example of ER=EPR is an eternal black hole in AdS, which describes a pair of quantum gravity systems in a thermofield double state with a temperature $T > T_{HP}$, where $T_{HP}$ is the Hawking-Page temperature~\cite{HawPag83, Mal01}.  In this example, there is an order ${\cal O}(1/G_N)$ amount of entanglement between left and right systems, and the ER bridge is classical, i.e. the systems are connected in a classical spacetime. See Fig.~\ref{fig:wh1}(a). 
By contrast,  when $T < T_{HP}$, left and right systems have entanglement of order $\sO(G_N^0)$ and are described by two classically disconnected spacetimes with entangled quantum fields. See Fig.~\ref{fig:wh1}(b). 
In this case, it is sometimes said that there is a quantum ER bridge between them; however, this turns out to be misleading: we will show that the most natural notion of quantum wormhole does not include this example. The absence of another independent definition of a quantum wormhole has meant that to date, ER=EPR, as stated above, is more of a slogan than an equivalence.

\begin{figure}[t]
\begin{center}
\includegraphics[scale=0.6]{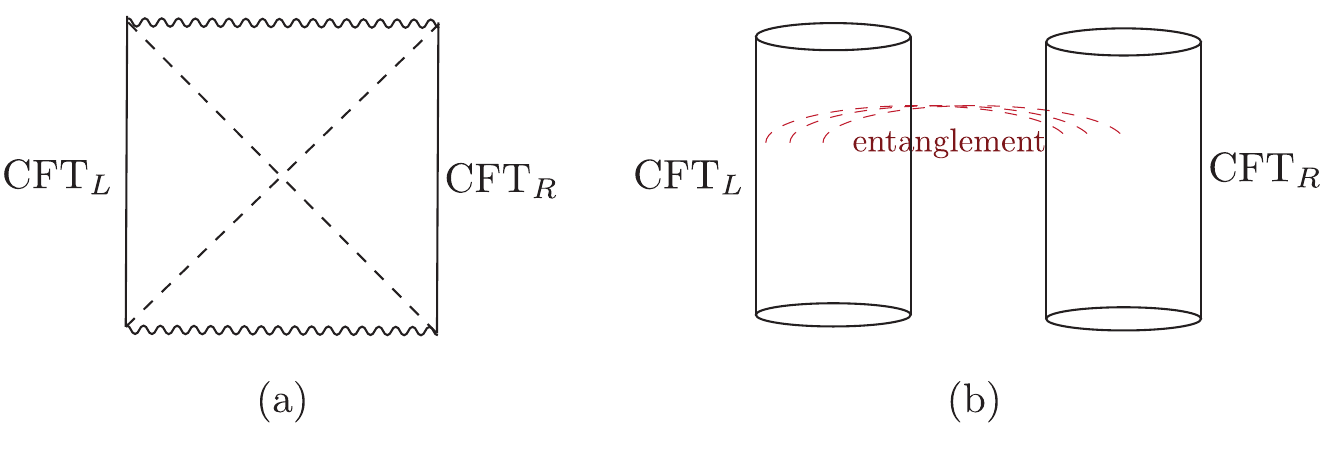} \;\;
\end{center}
\caption{The bulk duals of two copies of a holographic CFT in a thermofield double state. (a) 
For $T > T_{HP}$, the two sides are connected by an Einstein-Rosen bridge. 
(b) For $T < T_{HP}$, we have two classically disconnected spacetimes with entangled quantum fields between them.}
 \label{fig:wh1}
\end{figure}

In this paper we propose a more refined formulation of ER = EPR, called ``algebraic ER=EPR,''  
which associates bulk spacetime connectivity/disconnectivity with the operator algebraic structure of a quantum gravity system.\footnote{See~\cite{NogBan21, BanDor22, BanMor23} for a different algebraic perspective.} 
The operator algebraic structure (in terms of von Neumann algebras) not only includes information on the amount of entanglement, but also more importantly {\it the structure of entanglement}. The new  formulation additionally allows us to propose an independent definition of a quantum wormhole. Since spacetime geometry and geometrical notions such as connectivity are semiclassical concepts, ER=EPR and questions on spacetime structure from entanglement can be sharply formulated only in the $G_N \to 0$ limit ($G_N$ is the Newton constant), which is the  regime that we work with.\footnote{In Euclidean path integral, even for small $G_N$, there can be non-perturbative contributions from summing over spacetimes of different topologies. We will not consider such contributions 
in this paper due to our current lack of systematic Lorentzian understanding of such contributions.}

Our proposal is motivated from three recent developments:

\ben 

\item In~\cite{LeuLiu21a, LeuLiu21b} it was observed that classical connectivity of two systems in a thermofield double state  implies that  the operator algebra of each system is a type III$_1$ von Neumann algebra in the $G_N \to 0$ limit. In other words, classical spacetime connectivity in this context requires a very specific entanglement structure (as characterized by type III$_1$ von Neumann algebras) in the corresponding boundary system.

\item An interesting puzzle regarding ER=EPR was discussed in~\cite{EngFol22} in the context of an evaporating black hole, denoted $B(t)$. 
Consider two times $t_1 < t_P < t_2$ such that   
\be 
 S_{\rm vN} [B (t_1)] = S_{\rm vN} [B (t_2)] \sim \sO(1/G_N)  \ ,
 \ee
where $t_P$ is the Page time~\cite{Pag93b}, and $S_{\rm vN} [B (t)] $ is the von Neumann entropy of the black hole. See Fig.~\ref{fig:count0}. At $t_1$ the black hole system is classically disconnected from the radiation in the bulk interior, 
while at $t_2$ they are connected. See Fig.~\ref{fig:bulkwedges} and Fig.~\ref{fig:count}. 
Ref.~\cite{EngFol22} sharpened this observation by constructing a more straightforward but closely related example, in which both systems are gravitational and the minimal QES is homologous to the boundary. See Appendix~\ref{sec:grpu} for more details on this construction.

This example raises an obvious question: what is the physical difference between $t_1$ and $t_2$ that is responsible for
the differences in their spacetime structure? 

\item There have been some recent discussions~\cite{Wit21b,ChaLon22,ChaPen22,JenSor23,KudLeu23} of understanding the generalized entropies of black holes and of de Sitter
using the crossed product. The crossed product changes the structure of operator algebras of a quantum gravity system and thus the entanglement structure, so it is natural to investigate its implications on spacetime connectivity.

\een

\begin{figure}[t]
     \centering
 \begin{subfigure}[b]{0.49\textwidth}
         \centering
         \includegraphics[width=\textwidth]{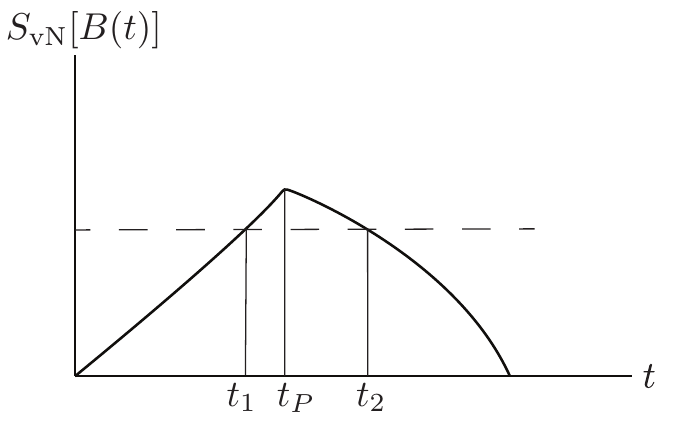}
         \caption{}
                 \label{fig:count0}
     \end{subfigure}
  \begin{subfigure}[b]{0.45\textwidth}
         \centering
         \includegraphics[width=\textwidth]{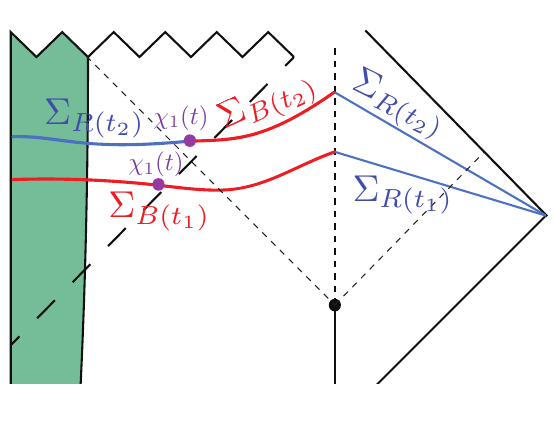}
         \caption{}
                 \label{fig:bulkwedges}
     \end{subfigure}
\caption{(a) The Page curve for an evaporating AdS  black hole describes the time evolution of the von Neumann entropy of the system $B$ dual to the evaporating black hole and the system $R$ consisting of the radiation evaporating into a reservoir. 
The puzzle pointed out in~\cite{EngFol22} 
considers two times $t_1 < t_P < t_2$ that have the same von Neumann entropy of order $\sO (1/G_N)$.
(b) The Penrose diagram of coupled black hole and reservoir systems. 
Cauchy slices for the black hole spacetime at $t_1$ and $t_2$ are shown, with the red regions, labeled by $\Sigma_B$, corresponding to slices of the entanglement wedge of the black hole. At $t_1$, the entanglement wedge of the black hole consists of the full Cauchy slice while that at $t_2$ only the region exterior to $\chi_1$.
Dashed lines represent the positive energy shocks created by coupling the black hole and reservoir.
}
\end{figure}

\begin{figure}
  \centering
     \begin{subfigure}[b]{0.49\textwidth}
         \centering
         \includegraphics[width=\textwidth]{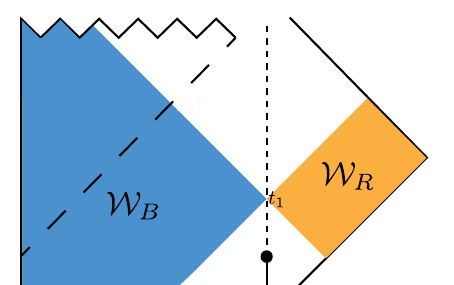}
         \caption{}
                 \label{fig:PrePage}
     \end{subfigure}
     \hfill
     \begin{subfigure}[b]{0.49\textwidth}
         \centering
         \includegraphics[width=\textwidth]{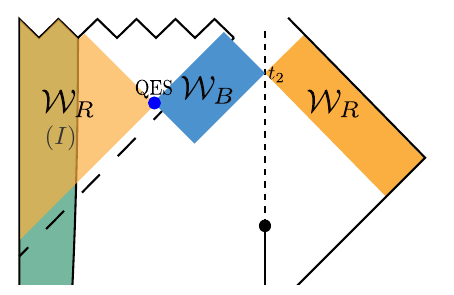}
         \caption{}
              \label{fig:PostPage}
     \end{subfigure}
\caption{(a) At $t_1$, Cauchy slices of the entanglement wedge $\sW_B$ of the black hole system are inextendible, and $\sW_B$ disconnected with the entanglement wedge of the radiation. (b) At $t_2$, the entanglement wedge of the black hole is given by the part labeled as $\sW_B$, while region $I$ is a subset of the entanglement wedge $\sW_R$ of the radiation. They are now classically connected at the quantum extremal surface (QES).  In the Penrose diagram, $\sW_B$ and $\sW_R$ appear to also be connected at the boundary at all times after coupling,  but this is just an artifact of the Penrose diagram: $\sW_B$ and $\sW_R$ are separated by an infinite proper distance there and are not gravitationally interacting.
 }
 \label{fig:count}
\end{figure}

Our proposal incorporates and elucidates upon these developments; it turns out that this crucially requires a  proper definition of classical and quantum spacetimes. As we will see, even in the $G_N \to 0$ limit, there can be 
``quantum'' signatures remaining in a spacetime, which may prevent it from being fully classical. 
For example, in the evaporating black hole example, we argue that at $t_1$, the black hole is a quantum spacetime is quantum connected to the reservoir even though prima facie it appears to be classical and classically disconnected from the reservoir.

The contrast between quantum and classical connectivity between the black hole and the radiation at times $t_1$ and $t_2$ raises another question. It is well-established that the Page time is (1) the time at which the von Neumann entropy of the radiation reaches a global maximum, and subsequently~\cite{Pag93b, HayPre07, Pag13} (2) the time at which the black hole becomes ``transparent'', so that information about the infalling matter can be recovered in the radiation; we now make the observation that (3) it is time at which, in the holographic picture, the radiation and the black hole entanglement wedges become classically connected in the bulk interior. It is therefore natural to ask:

\medskip

{\it What really happens microscopically at the Page time that leads to the confluence of these dynamical changes?
} 

\medskip

\begin{figure}[t]
\begin{center}
\includegraphics[scale=1]{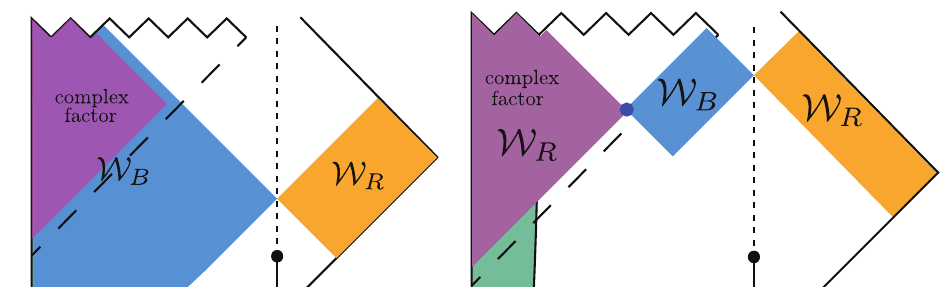} 
\end{center}
\caption{A cartoon of the dynamical transition at the Page time. A type III$_1$ factor consisting of operators of high complexity (Python's lunch~\cite{BroGha19, EngPen21a, EngPen21b}) is transferred from the black hole system to the radiation system at the Page time.}
 \label{fig:page}
\end{figure}

The new and algebraic understanding of connectivity of quantum gravitational systems also brings new insights into the above question. 
We will show in Sec.~\ref{sec:trans} that at the Page time, the system undergoes a dynamical transition in terms of transfer of a particular type III$_1$ subalgebra consisting of operators of high complexity from the black hole to its radiation.  See  Fig.~\ref{fig:page}. This is motivated by the switch between dominant quantum extremal surfaces~(QESs) at the Page time. While QESs are frequently regarded as devices for computing the von Neumann entropy via the ``QES formula''~\cite{EngWal14}, they also define the entanglement wedge~\cite{Wal12, CzeKar12, HeaHub14} and can be used to compute reconstruction complexity~\cite{BroGha19, EngPen21a, EngPen21b}. We propose that switchovers of dominance between QESs can be generically understood at a more fundamental level as the transfer of a subalgebra of high complexity operators between a system and its complement in the large-$N$ limit, in keeping with the quantum information theoretic arguments of~\cite{HarHay13}.
Such a transfer can be used to explain the differences between the structure  of the black hole system at $t_1$ and $t_2$, and the transfer of quantum information from black hole to radiation.

Our main results may be organized into the following list:

\ben 

\item We formulate the following algebraic ER=EPR proposal for a bipartite state on two complete asymptotic boundaries $R_1$ and $R_2$: 

{\it 
Consider two entangled systems $R_1$ and $R_2$ in a pure state $\ket{\psi_{R_1 R_2}}$ with a gravitational bulk dual, which we consider
in the $G_N \to 0$ limit.\footnote{We assume throughout that the bulk dual to $\ket{\psi_{R_1 R_2}}$ has a sensible $G_N \to 0$ limit.}
Denote the operator algebras of $R_1$ and $R_2$ in the large $N$ limit respectively as $\sA_{R_1}$ and $\sA_{R_2}$, which are von Neumann algebras.  Then the entanglement wedge of $R_1 \cup R_2$ is 
\bea
\label{a1}
&\text{disconnected:}  \quad \text{iff $\sA_{R_1}$ and $\sA_{R_2}$ are type I}  \ ; \\
\label{a2}
&\text{connected by a classical wormhole (classical bulk):}   \quad \text{iff $\sA_{R_1}, \sA_{R_2}$ are type III$_1$} 
; \\
& \text{connected by a quantum wormhole ( quantum bulk):}  \quad \text{ 
$\sA_{R_1}$, $\sA_{R_2}$ are not type I}.  
\label{a3}
\eea
}
\noindent The notions of classical and quantum connectivity \textit{require} a specification of the underlying spacetime as classical or quantum\footnote{Classical connectivity only makes sense in a classical spacetime.}, and as we explain explicitly in the main text, this property is something that can in principle be diagnosed intrinsically from the boundary. Note that in~\eqref{a3} we have ``if'' rather than ``iff'' as these are our \textit{definitions} of ``quantum connected''. 
Equation~\eqref{a2} is a generalization of~\cite{LeuLiu21a, LeuLiu21b}. ~\eqref{a1}-\eqref{a3} can be motivated in the following sense. 
When $\sA_{R_1}, \sA_{R_2}$ are type I, then there exists a factorization of the Hilbert space $\sH = \sH_{R_1} \otimes \sH_{R_2}$ with $\sH_{R_1}, \sH_{R_2}$ associated with $R_1$ and $R_2$. But for type II and III algebras, such a factorization does not exist, and thus $R_1$ and $R_2$ can be intuitively considered to be connected.

\item We generalize the above bipartite algebraic ER=EPR to multipartite case, using the bulk dual of canonical purification~\cite{EngWal18}.

\item We use algebraic ER=EPR and its multipartite version to explain how different subsystems of an evaporating black hole system are connected to one another. In addition to the connectivity between the black hole and radiation considered here, we also discuss how the island is connected to the radiation. 

\item We argue that the crossed products discussed in~\cite{Wit21b, ChaLon22, ChaPen22, JenSor23, KudLeu23} lead to quantum  connectivity. 

\item We make a general proposal that switchovers in the dominant QES generically originate in the dynamical transfer of a type III$_1$ factor consisting of operators of high complexity. We further argue for such a transfer based on the principle that the transition should not involve any non-analytic behavior in low energy physical observables. 

For the evaporating black hole, we show that this mechanism of transfer of complexity can be used to explain the differences of connectivity between the black hole and radiation at times $t_1$ and $t_2$, and the transfer of high complexity quantum information from black hole to radiation first {discussed} in~\cite{HarHay13}.

\een

\bigskip

\noindent {\bf Assumptions, Conventions, and Notation}

\bigskip
\noindent We work in the large-$N$ limit, large-$\lambda$ limit of AdS$_{d+1}$/CFT$_{d}$, and we will specifically work to leading order in 
the $G_{N} \to 0$ limit, which is equivalent to the leading order of the $N \to \infty$ limit.\footnote{Those concerned with a discussion of evaporating black holes at leading order in the $G_{N}$ expansion should defer their concerns to Sec.~\ref{sec:quan1}, where we discuss this in detail. We generally work on a moment of time slice in the adiabatic regime; if time evolution is necessary, we will work at finite $N$ and then take the $N\rightarrow \infty$ limit, keeping the leading order behavior only.} 
Here we list our notation, definitions, and assumptions about the bulk and boundary theories. 

\noindent On the bulk side:
\begin{itemize}
    \item We will use ${\cal W}_{R}$ to denote the entanglement wedge of a boundary subsystem $R$. This will typically denote the standard entanglement wedge including the homology constraint. We will occasionally relax the homology constraint to discuss islands; when we do so we will say so explicitly.      
    \item We assume the QES formula~\cite{EngWal14} and the strong Python's Lunch conjecture~\cite{BroGha19, EngPen21a, EngPen21b}, to be reviewed in Sec.~\ref{sec:complex}. 
  
    \item We assume subalgebra-subregion duality~\cite{LeuLiu22} (see also~\cite{Har16, KanKol18, Fau20, KanKol21, GesKan21, FauLi22}): the boundary subalgebra ${\cal A}_{R_{i}}$ obtained in the large-$N$ limit is identified with 
    the bulk operator algebra in the entanglement wedge ${\cal W}_{R_i}$ of $R_i$. It is important to note that ${\cal A}_{R_{i}}$ depends on the (semiclassical) state of the system, which matches with state-dependence of the entanglement wedge depends on the on the gravity side.

      \item By $\mathscr{I}$ we mean a single complete connected component of the asymptotic AdS boundary. 
\end{itemize}

\noindent  On the boundary:

\begin{itemize}

\item By operator algebra we mean a von Neumann algebra. 

    \item By a `state', we mean either a vector or a density matrix.
    \item We say that a state is bipartite on two subsystems $R_1$, $R_2$ if it is a pure state on those two  subsystems: $\ket{\psi_{R_{1}R_{2}}}$. 
    \item We say that a boundary region $R$ has an edge if its boundary domain of dependence $D[R]$\footnote{$D[R]=D^{+}[R]\cup D{-}[R]$, where $D^{+}[R]$ ($D^{-}[R]$) is the set of points $p$ such that every past-directed (future-directed) curve from $p$ crosses $R$.} is continuously extendible in the boundary spacetime (i.e. it is a proper subset of a single connected asymptotic boundary). Otherwise it is edgeless.\footnote{As defined in~\cite{Wald}, the edge of an achronal closed set $S$ is the set of points $p$ in $S$ such that every open neighborhood $O$ of $p$ contains a point in the future of $p$, a point in the past of $p$, and a timelike curve connecting those two points which does not intersect $S$. An edgeless set is a set without an edge.}

    \item We will say that a state is multipartite if it is not bipartite.

    \item Given a CFT state $\ket{\psi_{R_{1}\cdots R_{n}}}$ for boundary subsystems $R_i, i=1,2, \cdots, n$, we use 
     ${\cal A}_{R_{i}}$ to denote the operator algebra associated with $D[R_{i}]$ in the large $N$ limit. 
   
\end{itemize}

\bigskip

\noindent {\bf Plan of the paper}

\bigskip

The plan of the paper is as follows.  In Sec.~\ref{sec:er}, we give a general formulation of the algebraic ER=EPR, including a discussion of various quantum connected examples. 
In Sec.~\ref{sec:multi} we generalize the proposal to multiple-partite case, as well as a new understanding of islands. In Sec.~\ref{sec:trans}, we elaborate on the dynamical transition at the Page time. 
We conclude in Sec.~\ref{sec:diss} with a discussion of apparent counterexamples and some future perspectives.

\section{Algebraic ER=EPR} \label{sec:er}

In this section we formulate our algebraic ER=EPR proposal for bipartite edgeless states.

\subsection{Setup} \label{sec:cvst}

Consider two copies of the boundary theory in a pure state $\ket{\psi_{R_{1}R_{2}}}$, with $R_{1,2}$ denoting the corresponding boundary systems, which will be complete asymptotic boundaries. Suppose $\ket{\psi_{R_{1}R_{2}}}$ has a semiclassical description, 
described by a bulk spacetime $\sW_{R_1 \cup R_2}$ with boundary $R_1 \cup R_2$. 
$\sA_{R_1}$ and $\sA_{R_2}$ are respectively the relevant operator algebras of $R_1$ and $R_2$ in the large $N$ limit. 
We assume that both $\sA_{R_1}$ and $\sA_{R_2}$ are factors and 
\be
\sA_{R_1}' = \sA_{R_2} ,
\ee
where $'$ denotes commutant. We use $\sH_{\rm bulk}$ to denote the bulk Hilbert space around $\sW_{R_1 \cup R_2}$ which is identified with the boundary Hilbert space around $\ket{\psi_{R_{1}R_{2}}}$ in the large $N$ limit. We will use them interchangeably. Our goal is to formulate  an intrinsic boundary description of connectivity of $\sW_{R_1 R_2}$.

Before delving into technical definitions, it is worth reiterating that the notion of ``spacetime connectivity'' is an emergent concept valid in the semiclassical regime, i.e.  in the $G_{N}\rightarrow 0$ limit.  At general finite $G_N$ (i.e. finite $N$), gravitational systems  have large 
fluctuations in the geometry, and there does not appear to exist a precise definition of ``connectivity''. While naively the $G_N \to 0$ limit should give a classical spacetime, one of the lessons of the past few years starting with~\cite{Pen19, AEMM} is that even in this limit, certain quantum signatures can remain in the spacetime, which may prevent the spacetime from being fully classical.

We will therefore begin our formulation of ER=EPR with a definition of classical and ``quantum'' spacetimes:

\begin{enumerate}

\item We say that a state describes a {\it classical spacetime} if there exists a family of Cauchy slices with boundary time ranges of order $O(G_N^0)$,
and a coordinate basis on each member of this family, 
in which the metric components are all of order ${\cal O}(G_N^0)$, and furthermore in the $G_N \to 0$ limit, 
 
 \ben 
 
\item  fluctuations of diffeomorphism invariants go to zero as ${\cal O}(G_{N}^{a})$ for some $a>0$;\footnote{This is a bit more general than the usual semiclassical description where the fluctuations are of order ${\cal O}(\sqrt{G_N})$.} 

\item 
spacetime diffeomorphism invariants such as volumes, areas, and lengths  on this family 
do not scale as $O(G_{N}^{-a})$ with $a > 0$.
\een

\item  We say that a state describes a {\it quantum volatile} spacetime if there is a coordinate  basis where the metric components are of order ${\cal O}(G_N^0)$, and either or both of the two conditions in item 1 are violated, but fluctuations of geometric invariants are suppressed in some small parameter $\epsilon \sim {\cal O} (G_N^0)$. In the case that only (b) is violated, such an $\epsilon$ does not have to exist.

\item If  spacetime fluctuations of a state are unsuppressed, i.e. of order ${\cal O} (G_N^0)$ but with no other suppression parameter, or scaling with $G_N$ with a negative power, we say it has {\it no geometric} description.

\end{enumerate}

We will assume that quantum volatility of the bulk can be diagnosed from properties of boundary quantities. That is, there exist CFT quantities that can be used to probe the magnitudes or fluctuations of bulk geometric invariants. 
Under this holographic assumption, we define the \textit{``the classical condition''} as follows: if the bulk  invariants  
computed from the boundary quantities do not scale with $G_{N}^{a<0}$ or have vanishing fluctuations, the dual CFT state satisfies the classical condition.

Examples of such boundary quantities include various two-sided observables such as two-point functions (i.e. with two operators in $R_1, R_2$ respectively), which can be used to probe geodesic length, or the complexity=volume conjecture, which can be used to probe bulk volume.

\subsection{Algebraic ER=EPR proposal} \label{sec:prop}

We now present our proposal of algebraic ER=EPR, starting with the boundary description of a disconnected bulk spacetime. First recall the example of two copies of CFT in the thermofield double state at $T < T_{HP}$. 
In this case, the bulk spacetime consists of two disconnected global AdS spacetimes. We can perform canonical quantization 
in each of them, which results in the bulk Hilbert space $\sH_{\rm bulk} = \sH_R^{\rm (Fock)} \otimes \sH_L^{\rm (Fock)}$.  The bulk state is described by an entangled state between $\sH_R^{\rm (Fock)}$ and $\sH_L^{\rm (Fock)}$. In the large $N$ limit, the boundary algebras $\sA_R, \sA_L$ of CFT$_R$ and CFT$_L$ are type I, dual respectively to $\sB (\sH_R^{\rm (Fock)})$ and $\sB (\sH_L^{\rm (Fock)})$.\footnote{$\sB (\sH)$ denotes the algebra of bounded operators on $\sH$.}  

This example motivates us to propose the following general statement  for $R_1, R_2$ in a pure state describing a classical $\sW_{R_1 \cup R_2}$,
\be \label{disp}
\text{\it $\sA_{R_1}$ and $\sA_{R_2}$ are type I if and only if $\sW_{R_1 \cup R_2}$ is disconnected.} 
\ee
When a classical spacetime $\sW_{R_1 \cup R_2}$ is disconnected, it means that $\sW_{R_1}$ and $\sW_{R_2}$ are each a 
complete spacetime. In this case, in the $G_N \to 0$ limit, given that no geometric invariants scale with $G_{N}^{a<0}$, 
we can build a Fock space on $\sW_{R_1}$ (similarly $\sW_{R_2}$) from the standard procedure of canonical quantization of matter fields and metric perturbations. We then have $\sH_{\rm bulk} = \sH_{R_1}^{\rm (Fock)} \otimes \sH_{R_2}^{\rm (Fock)}$, and 
$\sA_{R_1} = \sA_{\sW_{R_1}} = \sB (\sH_{R_1}^{\rm (Fock)})$ is type I (as is $\sA_{R_2}$). 
Conversely, when $\sA_{R_1}$ and $\sA_{R_2}$ are type I, the bulk Hilbert space $\sH_{\rm bulk}$ can be factorized into 
a tensor product of those associated $R_1$ and $R_2$, i.e. $\sH_{\rm bulk} = \sH_{R_1} \otimes \sH_{R_2}$. When the bulk spacetime is classical, $R_1$ and $R_2$ are then boundaries of disconnected components as otherwise the Hilbert space cannot be factorized. 
When $\sW_{R_1 \cup R_2}$ is quantum, we can use $\sA_{R_1}$ and $\sA_{R_2}$ being type I as a {\it definition} of a disconnected bulk.

Statement~\eqref{disp} immediately implies that
\be 
\text{\it $\sA_{R_1}$ and $\sA_{R_2}$ are type II or III if and only if $\sW_{R_1 \cup R_2}$ is connected.} 
\ee
Depending on whether the dual spacetime $\sW_{R_1 \cup R_2}$ is classical or quantum volatile we can 
further separate whether the bulk spacetime is classically connected or quantum connected. To do so, we must first give a precise definition of these notions.  

When $\sW_{R_1 \cup R_2}$ is classical, it can be connected in the usual sense that there exists a continuous path whose length does not diverge with $G_{N}^{a<0}$ from $R_1$ to $R_2$, or disconnected if such a path does not exist. 
Classical connectivity can thus be defined for classical spacetimes as follows: 

\paragraph{Definition:} \textit{Let $\sW_{R_{1}\cup R_{2}}$ be a classical spacetime. It has a classical wormhole connecting $R_{1}$ to $R_{2}$ (equivalently, is classically connected) if there exists a continuous spacelike path $\gamma$ from $\sW_{R_{1}}$ to $\sW_{R_{2}}$ that (1) lives in $\sW_{R_{1}\cup R_{2}}$ and (2) has a length that does not scale with $G_{N}^{a<0}$. 
}

\vspace{0.3cm}

In Sec.~\ref{sec:class} we will argue that, for a classical spacetime, classical connectivity of $\sW_{R_{1}\cup R_{2}}$ is equivalent to ${\cal A}_{R_{1}}$ and ${\cal A}_{R_{2}}$ being type III$_{1}$. 
The basic idea comes from the observation in~\cite{LeuLiu21a,LeuLiu21b} that two systems in a thermofield double state are classically connected only if 
the operator algebra of each system is a type III$_1$ von Neumann algebra in the $G_N \to 0$ limit.
This observation can readily be applied to general bipartite situations with two systems $R_1, R_2$ together in an entangled pure state. 
Intuitively, if the dual spacetime $\sW_{R_{1} \cup R_{2}}$ has a classical wormhole connecting $R_{1}$ to $R_{2}$, then the entanglement wedges $\sW_{R_1}, \sW_{R_2}$ for $R_1$ and $R_2$ should both have a boundary in the interior of the spacetime, and we furthermore expect that there exists a Cauchy slice of the dual spacetime on which this boundary is just the shared QES for $R_1, R_2$. 
In the $G_N \to 0$ limit, the bulk theory reduces to a quantum field theory on $\sW_{R_{1}\cup R_{2}}$, and the algebra associated with a 
 proper subregion is type III$_{1}$~\cite{Ara64, Fre84, Haag, Wit18}. 
Invoking subalgebra-subregion duality of~\cite{LeuLiu22}, we find that the boundary algebra ${\cal A}_{R_{i}}$ for $R_i$ is likewise type III$_{1}$. We thus expect that when the spacetime is connected as a classical geometry, the corresponding boundary algebras are type III$_{1}$.

For quantum volatile spacetimes, we need a notion of quantum connectivity.\footnote{Later we will also consider cases involving connectivity between a classical entanglement wedge and a quantum volatile entanglement wedge.}
 Intuitively, this should be some failure of factorization between ${\cal A}_{R_{1}}$ and ${\cal A}_{R_{2}}$ in the $G_{N}\rightarrow 0$ limit. A clear way of formalizing this expectation is to demand that $\sH_{\rm bulk}$  cannot be factorized. 
This is simply the requirement that the algebra not be type I. We therefore define quantum connectivity as follows:

\paragraph{Definition:} \textit{Let $\sW_{R_{1}\cup R_{2}}$ be a quantum volatile spacetime. It has a quantum wormhole connecting $R_{1}$ to $R_{2}$ (equivalently, is quantum connected) if ${\cal A}_{R_{1}}$ and ${\cal A}_{R_{2}}$ are not type~I. 
}

In Sec.~\ref{sec:quan1} we will argue that in the example an evaporating black hole before the Page time (i.e. at $t_1$ of Fig.~\ref{fig:count0}), where the black hole is seemingly classically disconnected from radiation, they are in fact quantum connected. 
Other examples of quantum connectivity are discussed in Sec.~\ref{sec:quan2} and Sec.~\ref{sec:quan3}, which include those whose bulk operator algebras are obtained by the crossed product and the thermal field double state at time $O(1/G_N)$.

Collecting all of the above elements, we have the following full algebraic ER=EPR proposal:

\paragraph{Proposal:} \textit{Let $\ket{\psi_{R_{1}R_{2}}}$ be a pure bipartite state on two copies of the static cylinder (so that $R_{1}, R_{2}$ are edgeless), and let $\sW_{R_1 \cup R_2}$ be its semiclassical dual description in the large-$N$ limit. Then: 
\begin{enumerate}[I.]
 \item $\sW_{R_1 \cup R_2}$ is disconnected if  and only if ${\cal A}_{R_{1}}$ and ${\cal A}_{R_{2}}$ are both type I.  
	\item $\sW_{R_1  \cup R_2}$ has a classical wormhole connecting $R_{1}$ to $R_{2}$ if and only if ${\cal A}_{R_{1}}$ and ${\cal A}_{R_{2}}$ are both type III$_{1}$ and $\ket{\psi_{R_{1}R_{2}}}$ satisfies the classical condition. 
 	\item  $\sW_{R_1  \cup R_2}$ has a quantum wormhole connecting $R_{1}$ to $R_{2}$ 
	if and only if  $\ket{\psi_{R_{1}R_{2}}}$ fails the classical condition and $\sA_{R_1}, \sA_{R_2}$ are not type I. 
\end{enumerate}}

We will now further elaborate on item II. {We will also discuss a provision for a special case where $\sW_{R_{1}\cup R_{2}}$ is quantum volatile, but by acting on just one of the $R_{i}$ with a unitary, it is possible to make the spacetime classical. In this case we will argue that it is still reasonable to discuss classical connectivity, and we will give an appropriate diagnostic thereof.} We close with examples for item III.

\subsection{Connectivity of a classical spacetime} \label{sec:class}

We now give a more detailed justification of item II (classical connectivity) of algebraic ER=EPR.

\vspace{0.3cm}

In arguing that the type III$_{1}$ algebra of one subsystem of a bipartite edgeless state is equivalent to the existence of a classical wormhole as defined above when the spacetime is classical,  
we will use the following lemma:

\begin{lem} There exists a Cauchy slice of $\sW_{R_{1} \cup R_{2}}$ which is a union of a Cauchy slice of $\sW_{R_{1}}$ and a Cauchy slice of $\sW_{R_{2}}$. 
\end{lem}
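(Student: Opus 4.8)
The plan is to produce the decomposition from a single bulk Cauchy slice of the full wedge on which the shared quantum extremal surface of $R_1$ and $R_2$ is the dividing surface. The key input I would record first is that, because $\ket{\psi_{R_1 R_2}}$ is pure and $R_1, R_2$ are complementary on the full boundary, the minimal QES of $R_1$ coincides with that of $R_2$. Indeed, for any bulk surface $\chi$ homologous to $R_1$ (equivalently to $R_2$), the generalized entropy $S_{\rm gen}(\chi) = \mathrm{Area}(\chi)/(4 G_N) + S_{\rm bulk}$ takes the same value whether computed with the $R_1$-side homology region or the $R_2$-side one: the area term is identical, and the two bulk entropies agree because the bulk state on a full Cauchy slice is pure, so the bulk von Neumann entropies of the two complementary bulk regions coincide. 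Hence $S_{\rm gen}$ is minimized at the same surface, which I denote $\chi$; and since the global state is pure, $\sW_{R_1 \cup R_2}$ is the entire bulk domain of dependence.

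With $\chi$ in hand I would invoke the maximin characterization of the QES~\cite{Wal12}: there is a Cauchy slice $\Sigma$ of $\sW_{R_1 \cup R_2}$ on which $\chi$ is the relevant minimal surface and which $\chi$ divides into two pieces $\Sigma_1$ and $\Sigma_2$, with $\partial \Sigma_1 = R_1 \cup \chi$ and $\partial \Sigma_2 = R_2 \cup \chi$. Each $\Sigma_i$ is achronal, being a subset of the achronal slice $\Sigma$, and is homologous to $R_i$, so it is a legitimate homology region for $R_i$. By the definition of the entanglement wedge as the bulk domain of dependence of the homology region, any two homology regions with boundary $R_i \cup \chi$ share a domain of dependence, so $D[\Sigma_i] = \sW_{R_i}$, i.e. $\Sigma_i$ is a Cauchy slice of $\sW_{R_i}$. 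Since $\Sigma = \Sigma_1 \cup \chi \cup \Sigma_2$ is by construction a Cauchy slice of $\sW_{R_1 \cup R_2}$, this exhibits the required decomposition.

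The crux is not the gluing, which is automatic once both sub-slices are read off one common slice, but rather guaranteeing that such a common slice exists and that $\chi$ cleanly separates it into the two entanglement-wedge homology regions; this is essentially entanglement-wedge complementarity. The delicate points are: (i) the coincidence of the two QESs, which relies on purity of the global (and hence bulk) state and must be stated carefully when $\chi$ has several connected components; (ii) that $\chi$ actually disconnects $\Sigma$ into exactly two homology regions, which uses the homology constraint; and (iii) that the identifications $D[\Sigma_i] = \sW_{R_i}$ hold globally rather than only near $\chi$. The maximin construction is what makes (i)--(iii) simultaneously controllable, since it delivers a single global Cauchy slice on which $\chi$ is minimal and on which both homology regions are defined at once. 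I would therefore lean on it rather than attempting to glue \emph{a priori} unrelated Cauchy slices of $\sW_{R_1}$ and $\sW_{R_2}$ along $\chi$, where matching the two slices at $\chi$ would be the genuine difficulty.
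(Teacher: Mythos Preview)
Your argument is correct, but it runs in the opposite direction from the paper's. The paper takes arbitrary Cauchy slices $\Sigma_1,\Sigma_2$ of $\sW_{R_1},\sW_{R_2}$, observes that purity forces $\partial\Sigma_1=\partial\Sigma_2=\chi$, and simply declares $\Sigma=\Sigma_1\cup\Sigma_2$ to be a boundaryless (hence Cauchy) slice of $\sW_{R_1\cup R_2}$---precisely the gluing route you anticipated and set aside. Your approach instead pulls a single global slice from maximin and restricts it to the two homology regions, which neatly avoids having to check that two \emph{a priori} unrelated slices meet achronally along $\chi$; the cost is invoking the maximin machinery, which the paper's argument does not need. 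Both arguments share the essential input (complementary recovery: the QES of $R_1$ equals that of $R_2$ by purity), and either is adequate at the level of rigor of the paper; your concern about the gluing step is legitimate but is handled in the paper's version by the fact that $\sW_{R_1}$ and $\sW_{R_2}$ are disjoint domains of dependence meeting only along $\chi$.
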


\begin{proof}
Because $\ket{\psi_{R_{1} R_{2}}}$ is a pure state, $\sW_{R_{1} \cup R_{2}}$ has no boundary in the interior of the spacetime and the QES defining $\sW_{R_{1}}$ is identical to the QES defining $\sW_{R_{2}}$\footnote{If there are competing QESs of identical entropy to leading order, we must pick the same one for the two subsystems.}. Let $\Sigma_{i}$ be Cauchy slices of the $\sW_{R_{i}}$ where $i=1,2$; because the QES is the same for $R_{1}$ and $R_{2}$, $\partial \Sigma_{1}=\partial \Sigma_{2}$.\footnote{This also includes the possibility that the $\sW_{R_{i}}$ are disconnected 
in which case $\partial \Sigma_{1}=\partial \Sigma_{2} = \varnothing$.} Define $\Sigma=\Sigma_{1}\cup \Sigma_{2}$. Then $\Sigma$ has no boundary, so it is a Cauchy slice of $\sW_{R_{1}\cup R_{2}}$.\end{proof}

We will now give the argument identifying type III$_{1}$ with $G_{N}$-finite pathwise connectedness. By subalgebra-subregion duality, ${ \cal A}_{\sW_{R_{i}}} = \sA_{R_i}$ where ${ \cal A}_{\sW_{R_{i}}}$ denotes 
the algebra of the bulk quantum fields on $\sW_{R_{i}}, i=1,2$.

Suppose  $\sW_{R_{1} \cup R_{2}}$ connects $R_{1}$ to $R_{2}$, i.e. there exists a continuous path (whose length does not scale with powers of $G_N$) from $R_{1}$ to $R_{2}$ through $\sW_{R_{1} R_{2}}$. Since $\sW_{R_{1}}\cap\sW_{R_{2}}=\varnothing$, the path must exit $\sW_{R_{1}}$ and enter $\sW_{R_{2}}$. So $\partial \sW_{R_{i}}\neq \varnothing$. By global hyperbolicity of the domain of dependence, $\partial \Sigma_{i}\neq \varnothing$. So there is an interior boundary to each entanglement wedge, i.e. $\sW_{R_i}$ are proper subregions. Their corresponding subalgebras should then be type III$_{1}$ algebras.

Now consider the converse. Suppose that the $\sA_{R_i}$ are 
type III$_{1}$, and thus do not admit a definition of a trace. 
From our earlier discussion of disconnected case, this in turn means that 
$\sW_{R_i}$ cannot contain a complete Cauchy slice of the spacetime: it must have a boundary in the spacetime interior. 
By global hyperbolicity of a domain of dependence, every Cauchy slice $\Sigma_{i}$ of $\sW_{R_i}$ has  the same topology, and thus every $\Sigma_{i}$ has a boundary in the interior of the spacetime. Since $\ket{\psi_{R_{1}R_{2}}}$ is a pure state,  Cauchy slices of $\sW_{R_{1} R_{2}}$ have no boundary. Since Cauchy slices of the $\sW_{R_i}$ \textit{do} have a boundary, and by Lemma 1 there exists a Cauchy slice $\sW_{R_{1} \cup R_{2}}$ which is a union of the Cauchy slices of the $\sW_{R_i}$, $\partial \Sigma_{1}=\partial \Sigma_{2}\in \sW_{R_{1}  \cup R_{2}}$. Since the boundaries $\partial \Sigma_{i}$ are nonempty, there exists a path on $\Sigma$ from $\sW_{R_{1}}$ to $\sW_{R_{2}}$, and thus by assumption there also exists a $G_{N}$-independent length path i.e. $\sW_{R_{1}  \cup R_{2}}$ is connected.\footnote{Note that in this argument we only used that $\sA_{R_i}$ does not have a trace, which applies to any type III algebras. The requirement of type III$_1$ comes from self-consistency: the bulk algebra $\sW_{R_i}$
in a region with an interior boundary must be type III$_1$ from the standard results of QFT in a curved spacetime.}

\subsection{In-Between Classical and Quantum Volatile Spacetimes} \label{sec:inbetween}

Suppose now that our spacetime $\sW_{R_{1}\cup R_{2}}$ is quantum volatile, but this is purely a consequence of the behavior of one entanglement wedge? That is, suppose that $\sW_{R_1}$ is classical but $\sW_{R_{2}}$ is quantum volatile? Given that $\sW_{R_{1}}$ will have a standard, non-fluctuating QES (which of course may be empty), it seems intuitive that it may still be possible to define classical connectivity if $\sW_{R_2}$ is quantum volatile due to $G_{N}$--dependent divergences in geometric invariants on the relevant family of Cauchy slices (rather than fluctuations in the geometry). In this case  a notion of classical connectivity can still be defined by asking if there exists a $G_{N}$-independent length path from $\sW_{R_{1}}$ to some point in $\sW_{R_2}$ that never exits $\sW_{R_{1}\cup R_{2}}$. Is there a boundary way to diagnose this special way of being quantum volatile? 

The aim of such a diagnostic is to determine if the QES of $\sW_{R_{2}}$ is non-fluctuating. A simple protocol for this is to apply a unitary $\mathbb{I}_{R_{1}}\otimes U_{R_{2}}$ which does not modify $\sW_{R_{1}}$. If the dual to the new state $\mathbb{I}_{R_{1}}\otimes U_{R_{2}}\ket{\psi_{R_{1}R_{2}}}$ is classical, then we may use item II to diagnose its connectivity. 

Does such a unitary always exist? Fortunately, the answer is yes: if the so-called \textit{canonical purification} of $\psi_{R_{1}}$ is classically connected, then there exists a path from $R_{1}$ to some point in $\sW_{R_{2}}$ whose length does not diverge with $G_{N}$. In this case and this case only, we say that $R_{1}$ and $R_{2}$ are classically connected in the state $\ket{\psi_{R_{1}R_{2}}}$ even though $\sW_{R_{2}}$ is quantum volatile. In Sec.~\ref{sec:multi} we will review the canonical purification and its gravitational dual, which is a simple CPT-conjugation of $\sW_{R_{1}}$ about its dominant QES.

\subsection{Examples of quantum connectivity} \label{sec:quan1}

We consider various examples of quantum connectivity. Since ours is the first rigorous independent definition of quantum connectivity, these examples serve as motivation for item III in lieu of a justification via equivalence to an extant definition.  

\subsubsection{Evaporating black holes}

We now examine in detail the evaporating black hole example discussed in the introduction, and argue that black hole and radiation are quantum connected.

It is worth stressing that we consider the bulk theory in the $G_N \to 0$ limit and focus on leading order in the small $G_N$ expansion.  
While the evaporation of a black hole involves a time range of order $O(G_N^{-1})$, and the scrambling time is of order ${\cal O} (\log G_N^{-1})$, we will only be interested in describing the states of the system at various but fixed times, rather than a dynamical description of the full history. For such a purpose, focusing on the leading term in a perturbative $G_N$ expansion is adequate.  Since the evaporation of the black hole is slow, at any given time only physics at time scales of order ${\cal O}(G_N^0)$ will be relevant, during which the system can be treated as being quasi-static and in quasi-equilibrium.

\begin{figure}
    \centering
\begin{subfigure}[b]{0.49\textwidth}
         \centering
         \includegraphics[width=\textwidth]{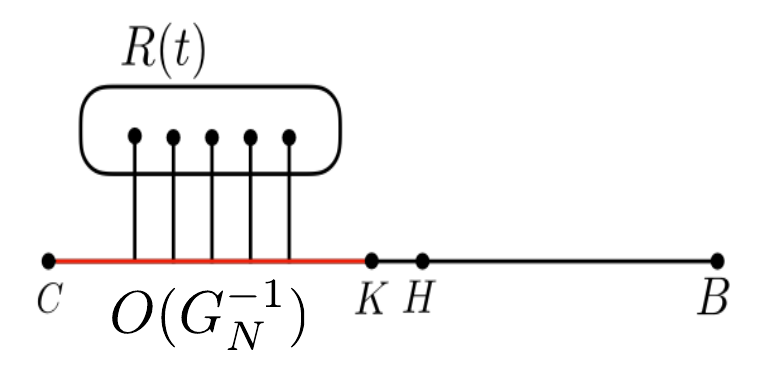}
         \caption{}
                 \label{fig:cartoon}
     \end{subfigure}
  \begin{subfigure}[b]{0.45\textwidth}
         \centering
         \includegraphics[width=0.3\textwidth]{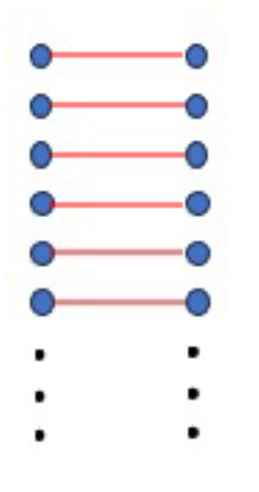}
         \caption{}
                 \label{fig:arakiwoods}
     \end{subfigure}
    \caption{(a) cartoon picture for the entanglement between the interior of the black hole and the radiation. A Cauchy slice is shown with $B$ denoting the boundary, $H$ the horizon, $K$ the QES that dominates after the Page time, and the spacetime smoothly capping off at $C$.
    (b) The example of Araki-Woods where two groups of $N \to \infty$ spins are pairwise entangled with each pair in a generic entangled state.}
    \label{fig:cart}
\end{figure}

At $t_1$, the black hole $B$ is entangled with the reservoir $R$ with $S_B \sim {\cal O}(1/G_{N})$, and the entanglement wedge $\sW_B$ for the black hole includes a complete Cauchy slice in the black hole geometry. See Fig.~\ref{fig:bulkwedges} and Fig.~\ref{fig:count}. 
The $O(1/G_N)$ entanglement comes solely from entanglement between matter excitations in the interior of the black hole and the reservoir\footnote{The AdS spacetime is coupled to the reservoir at the boundary -- there is an energy flux across there. 
Such couplings at the AdS boundary between the black hole and reservoir systems could lead to entanglement across the AdS boundary. Such entanglement, however, is not relevant for understanding the implications of the  entanglement between the interior of the black hole and the reservoir resulting from evaporation. To focus on the implications of the black hole evaporation on the state of the system at a given time $t$, we may simply decouple the two systems and evolve with the decoupled Hamiltonian. Such a decoupling will generate a shock wave, but it will not change the state of the system at time $t$.} 
with QES empty.
 In order to ``squeeze'' a large amount of entanglement into a Cauchy slice on which fluctuations are still suppressed with $G_N$, we expect the volume of the Cauchy slice to diverge with ${\cal O} (G_N^{-1})$. Indeed, this can be checked in the example of~\cite{AEMM}, and is expected to hold for generic evaporating black holes~\cite{SusZha14,BroGha19}. See Fig.~\ref{fig:cart}(a) for a cartoon.  
This immediately implies that the black hole spacetime at $t_1$ is quantum volatile rather than classical.

We had previously discussed the expectation that the algebra of bulk operators associated with a complete Cauchy slice is type I. A key assumption underpinning this conclusion is that no geometric quantity scales as $G_N^{a<0}$ in the limit that $G_N \to 0$. This assumption is violated by the black hole interior at $t\sim {\cal O}(G_{N}^{-1})$: the volume of the interior for most of this time range scales as $G_{N}^{-1}$. Consequently, the entanglement structure, illustrated schematically in Fig.~\ref{fig:cart}(a) for the black hole-reservoir system, resembles a famous example of Araki-Woods~\cite{AraWoo68} (see Fig.~\ref{fig:cart}(b)) where it is known that the algebra of operators is type III$_1$. Another related analogous system is a quantum field theory in flat space at finite temperature  in the infinite volume limit, 
which is also known to be generically type III$_1$ (see~\cite{Wit21a} for a review of the arguments). We thus expect that $\sA_{\sW_B}$ is type III$_1$, as is the boundary algebra for the black hole system $\sA_B$. It is interesting that a ``large'' amount of entanglement as a consequence of time evolution ``downgrades'' the corresponding algebras from type I to type III$_1$. This also results in a change in connectivity: per our proposal of Sec.~\ref{sec:prop}, $B$ and $R$ are quantum connected.

Consider now the full history of the evaporation of a black hole. At early times, the entanglement between $B$ and $R$ is of order $\sO (G_N^0)$, and so is the volume of the interior of the black hole, in which case we expect $\sA_B$ and $\sA_R$ to be type I. The situation is then similar to that of Fig.~\ref{fig:wh1}(b), and $B$ and $R$ are disconnected. 
At time $t_1< t_P$, when the entanglement between $B$ and $R$ and the volume of the interior are order $\sO (G_N^{-1})$, $\sA_B, \sA_R$ are type III$_{1}$: $B$ and $R$ are quantum connected. At $t_2 > t_P$,  $\sA_B$ and $\sA_R$ are both type III$_1$, and $\sW_B$  is classical. However, since $\sW_R$ is quantum volatile due to the island, application of our proposal is more subtle: we require the special provision of a spacetime with one classical entanglement wedge and one quantum volatile entanglement wedge as formulated in Sec.~\ref{sec:inbetween}. Recall that connectivity type in such scenarios is determined by the canonical purification of the classical of the two wedges. In this case, the prescription instructs us to canonically purify $B$, which generates a completely classical two-sided black hole (see Fig.~\ref{fig:cp} of Appendix~\ref{sec:grpu}). 
Thus $B$ and its canonical purification $\widetilde{B}$ is classically connected. 
This should be contrasted with the situation at $t_1$ where the canonical purification generates two copies of quantum volatile spacetime (see Fig.~\ref{fig:cp}). 
Thus during the evaporation of a black hole, there are three stages: at the beginning $B$ and $R$ are disconnected; then connected by a quantum wormhole; and finally connected by a classical wormhole.

\subsubsection{Spacetime fluctuations from crossed product} \label{sec:quan2}
Quantum volatility can also arise in the context of crossed product constructions, involving a microcanonical thermofield double state (at a sufficiently high energy) with energy fluctuations of $O(N^0)$~\cite{ChaPen22}. 
Such a state is still believed to describe a black hole system with two boundaries, but  the asymptotic time differences $t_L + t_R$ between two boundaries 
of the black hole geometry\footnote{We shoot a radial geodesic from the left boundary at time $t_L$, and $t_R$ is the right boundary time when the geodesic reaches the right boundary. Here $t_L, t_R$ are taken to go in the same direction.
In a classical spacetime, $t_L + t_R$ has no fluctuations in the $G_N \to 0$ limit.}
 have ${\cal O} (G_N^0)$ fluctuations. Furthermore, the  ``horizon'' area  has fluctuations of ${\cal O} (G_N)$, which means that the fluctuations of extrinsic curvature at the horizon are of order ${\cal O} (G_N^0)$. Here we put ``black hole'' and ``horizon'' in quotes as such concepts can now only be defined approximately in this context. 
 In this case, the boundary algebra for the right (left) system turns out to be type~II from the crossed product of a type III$_1$ algebra of single-trace operators with its modular flow~\cite{Wit21b, ChaPen22}.  The right and left systems are now quantum connected. 
 
We can view the emergence of the type II structure as a result of bulk quantum fluctuations ``upgrading'' the type III$_1$ algebra to type II. 
 Conversely, we can also say that bulk quantum fluctuations are bulk reflections of type II structure of the boundary operator algebras.  
More explicitly, consider a type III$_1$ von Neumann algebra $\sA$ in a cyclic and separating state, with the corresponding modular action denoted as 
$\alpha_t (a) = e^{i K t} a e^{- i Kt}$, where $K$ is the modular Hamiltonian, $t \in \RR$ is the modular time,  and $a \in \sA$. 
The crossed product $\sA \otimes_\alpha \RR$ acts on the Hilbert space $ {\cal K} = {\cal H} \otimes L^2 (\RR)$, where ${\cal H}$ is the Hilbert space acted on by $\sA$. In the example of microcanonical TFD, $\sA$ is the single-trace operator algebra of the right CFT (or the  operator algebra of bulk quantum fields in the right region of the black hole), and $\sH$ is the Fock space of quantum matter fields in the black hole spacetime. 
The crossed product plays the role of ``splitting'' the modular time into independent times $t_R$ and $t_L$ for $\sA$ and $\sA'$ respectively, and $L^2 (\RR)$ can be interpreted as the space of $L^2$ functions on the space $\RR$ of a new relative time $p = t_R + t_L$\footnote{Modular flow acts on $\sA$ and $\sA'$ in opposite manner, i.e. it leaves $t_R + t_L$ invariant.} 
The quantum mechanics of $L^2 (\RR)$ has $\hbar \sim {\cal O} (G_N^0)$. 
The physical state in $\sK$ 
involves a normalizable wave function $\psi (p)$ on $L^2 (\RR)$, in which the relative time $t_L + t_R$ always has fluctuations of ${\cal O} (G_N^0)$.\footnote{A state with no fluctuation must be proportional to $\delta (p)$, which is not normalizable.}

 A closely related example is the de Sitter spacetime in the presence of a pair of bulk observers 
 in complementary static patches~\cite{ChaLon22}. 
 In the joint state of the observers and quantum matter excitations (including the metric) in de Sitter, again $t_R + t_L$ has  order ${\cal O} (G_N^0)$ fluctuations in the $G_N \to 0$ limit, and the operator algebra associated with a static patch become type~II.  
The discussion of~\cite{Wit21b, ChaLon22, ChaPen22} has been generalized to general regions (both in AdS and more generally)~\cite{JenSor23} and other black holes~\cite{KudLeu23}, and similar statements can be made there.

Type II algebras have also appeared in the discussion of JT gravity with a finite $G_N$~\cite{PenWit23} and double scaled SYK 
models~\cite{Lin22}. In both of these examples, the bulk spacetime fluctuations are unsuppressed. They do not strictly fall under the purview of our classification as our regime is restricted to the $G_{N}\rightarrow 0$ limit. Nevertheless, these examples illustrates that at least some of our notions of connectivity can in principle be generalized to the finite $G_{N}$ regime, though we will not do so here.

To summarize, there are close ties between the crossed product structure with the associated 
type II algebras and quantum volatile spacetimes, which reflects that the corresponding states have very different entanglement structure from those describing classical spacetimes.

\subsubsection{Time-dependent algebraic ER=EPR} \label{sec:quan3}

Can connectivity change dynamically? Our current definition of a classical spacetime is given in terms of the behavior of a family of time slices spanning $O(G_N^0)$ of boundary times. Two asymptotic boundaries are connected if there exists \textit{some} spacelike path, anchored within this time span on the boundary, that connects the two boundaries through the bulk within the family. This definition admits a natural generalization to allow for a dynamical change in the spacetime from classical to quantum volatile, and a subsequent dynamical change in connectivity from classical to quantum or vice versa. 

We may simply say that a spacetime is classical in some open set ${\cal U}$ if ${\cal U}$ intersects $\partial M$ on some ${\cal O}(G_{N}^{0})$ time band, and ${\cal U}$ satisfies our definition of a classical spacetime. We can similarly say that a spacetime is quantum volatile in some open set ${\cal U}$ as above if ${\cal U}$ satisfies the definition of a quantum volatile spacetime. Under this generalization, we could have a dynamical change in connectivity as we can consider different families of Cauchy slices anchored at different boundary times.

 We have already seen this shift in the example of an evaporating black hole: $B$ and $R$ change from being disconnected (stage 1) to quantum connected (stage 2) to classically connected~(stage 3). $\sA_B$ and $\sA_R$ change from type I at stage 1 to type III$_1$ at stages 2 and 3. 
The time scale separating stage 1 and stage 2 is of order $O(G_N^{-1})$ while 
that separating between stage 2 and stage 3 can be of order $O(G_N^{0})$. 
It is intuitively reasonable that spacetime connectivity or the type of boundary algebras can change over the time scale of $O(G_N^{a<0})$. It is more surprising that $B$ and $R$ can change from quantum disconnected to classically disconnected over the time scale of order $O(G_N^0)$. We will elaborate this further in Sec.~\ref{sec:trans} where we identify a microscopic mechanism responsible for this and argue that this is a consequence of a dramatic change of the content of the boundary algebras, although not of the algebra type.  

Here we point out another simple example where a dynamical change in connectivity happens. Consider again the two CFTs in a thermofield double state, dual to the Schwarzschild-AdS black hole. This spacetime was used as a prototypical example in the introduction for classical connectivity under the strict  definition of Sec.~\ref{sec:prop} that did not allow for dynamical changes. Under the relaxed definition that does allow for changes in classicality of the spacetime, we may also consider a late-time Cauchy slice anchored at $t_{L},t_{R}\sim {\cal O}(1/G_{N})$. The vacuum-subtracted volume of the slice will diverge as $G_{N}\rightarrow 0$. At such times we then have a quantum volatile spacetime;
the boundary algebras at such times are likely still type III$_1$.\footnote{Defining the large $N$ limit of the boundary algebras at such times is subtle; we will not go into the details here.} Thus the spacetime shifts from quantum connectivity at early times ($t_{L},t_{R}\sim- {\cal O}(1/G_{N})$) to classical connectivity and back to quantum quantum connectivity at late times.

\begin{figure}
    \centering
    \includegraphics[width=0.5\textwidth]{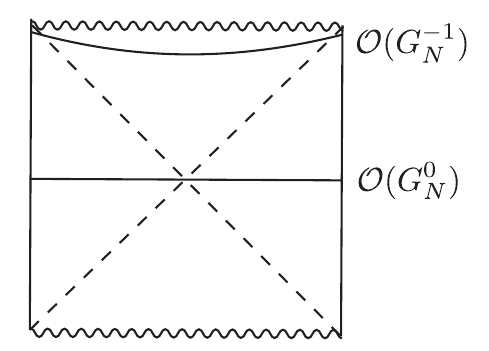}
    \caption{Two Cauchy slices of the AdS Schwarzschild black hole: at intermediate times, the spacetime is classical while at late times the spacetime is quantum volatile. }
    \label{fig:LateTimes}
\end{figure}

\section{Connectivity for Multipartite States}\label{sec:multi}

We proceed to generalize the above proposal for bipartite states to multipartite states. Now $R_1, R_2$ may be  
two boundaries of an $n$-boundary system, or $R_1$ and/or $R_2$ may be subregions of a given asymptotic boundary or of multiple asymptotic boundaries. We will also have occasion in Sec.~\ref{sec:aev} and in Sec.~\ref{sec:trans} to discuss algebras of operators that are \textit{not} associated to a geometric decomposition of the CFT Hilbert space. 
An important tool for our discussion is the gravitational construction~\cite{EngWal18} of canonical purification, which we now review. 

\subsection{Canonical Purification and its Gravitational Dual} \label{sec:cano}

Given a mixed state (a density matrix) $\psi$ acting on some Hilbert space ${\cal H}$, 
with the decomposition in 
its eigenbasis:
\be\label{dep}
\psi = \sum \limits_{i}\lambda_{i}\ket{i}\bra{i},
\ee
the canonical purification of $\psi$, denoted $\ket{\sqrt{\psi}}$ is a pure state on a doubled Hilbert space ${\cal H}\otimes {\cal H}$, defined by  ($\ket{\widetilde i}$ is the time reversal of $\ket{i}$)
\be\label{cp1}
\ket{\sqrt{\psi}} =  \sum \limits_{i}\sqrt{\lambda_{i}}\ket{i}\ket{\widetilde i} \ .
\ee
This definition may be thought of as a generalization for arbitrary mixed states of the map from the Gibbs ensemble to the thermofield double state. 

In many physical applications, including the evaporating black hole example discussed earlier, 
we are interested in a subsystem $R$ with
an associated operator algebra $\sA_R$. If $\sA_R$ is type I, then there is a local Hilbert space $\sH_R$ associated with $R$ and the canonical purification procedure described above can be applied to a density operator $\psi$ on $\sH_R$. However, in general in the $G_N \to 0$ limit, $\sA_R$ can become type II or III, in which case there does not exist a Hilbert space associated with $R$, and a density operator $\psi$ in the form of~\eqref{dep} cannot be defined.\footnote{For a type II algebra, a ``renormalized'' density operator can in principle defined, but it does not act on a local Hilbert space
and cannot be defined as~\eqref{dep}.} Thus the procedure~\eqref{cp1} cannot be used.   

This difficulty can be circumvented as follows.  A state $\psi$ on an algebra $\sA_R$ can be defined as a map from $\sA_R$ to $\field{C}$, i.e. 
$\psi (a) \in \field{C}, a \in \sA_R$. A GNS Hilbert space $\sH_\psi$ can be built around $\psi$ by
associating a vector $\ket{a} \in \sH_\psi$ to an element $a \in \sA_R$.\footnote{Heuristically we may interpret $\ket{a}$ as obtained by acting $a$ on $\psi$.} The state $\ket{\bid}\in \sH_\psi$ corresponding to the identity operator can be viewed as the canonical purification of state $\psi$. When $\sA_R$ is type I, this procedure leads to the same result as~\eqref{cp1}.

\begin{figure}[t]
    \centering
   \includegraphics[width=0.8\textwidth]{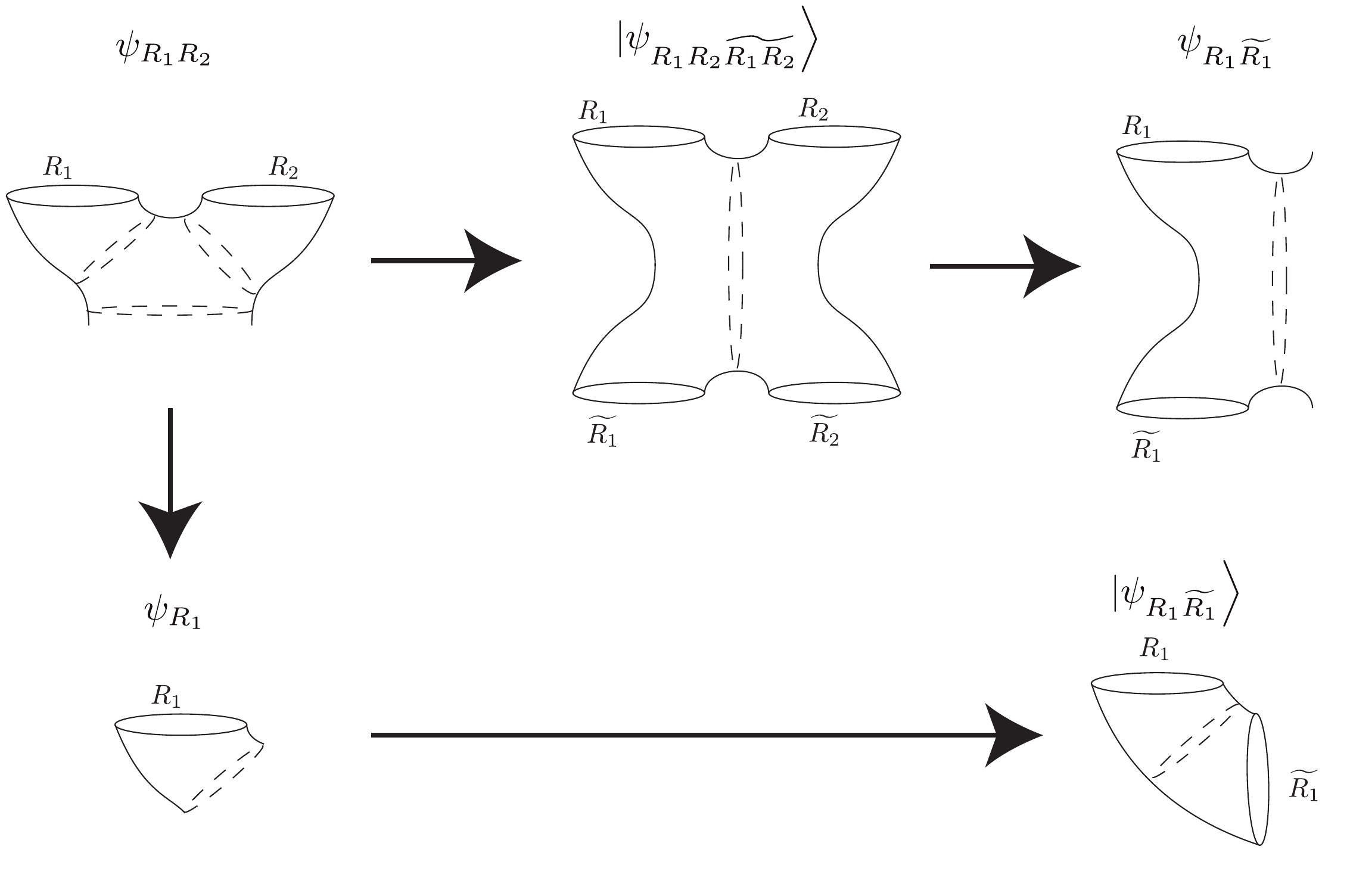}
    \caption{Examples of the gravitational dual of the canonical purification. 
    Top left: the entanglement wedge of a mixed state $\psi_{R_{1}R_{2}}$ on a Cauchy slice, with dashed lines marking the QESs of $\psi_{R_{1}}$, $\psi_{R_{2}}$ and $\psi_{R_{1}R_{2}}$. Top middle: the canonical purification of $\psi_{R_{1}R_{2}}$ into $\ket{\psi_{R_{1}R_{2}\widetilde{R_{1}R_{2}}}}$ on a Cauchy slice by CPT conjugating around the QES of $\psi_{R_{1}R_{2}}$; the dashed line is the QES of $\psi_{R_{1}\widetilde{R_{1}}}$. Top right: the entanglement wedge of $R_{1}\widetilde{R_{1}}$ in the state  $\ket{\psi_{R_{1}R_{2}\widetilde{R_{1}R_{2}}}}$. Bottom left: the state $\psi_{R_{1}}$ obtained by reducing $\psi_{R_{1}R_{2}}$ on $R_{2}$. Bottom right: the canonical purification of $\psi_{R_{1}}$. 
    }
   \label{fig:cpex}
\end{figure}

Now consider a system with a classical or quantum volatile bulk dual, and let $R$ be a boundary region, either be  edgeless or with edge, connected or disconnected. 
Suppose that $\sW_R$ is the entanglement wedge of $R$, with the corresponding QES denoted as $\chi$. From subalgebra-subregion duality, the bulk operator algebra associated with $\sW_R$ is identified with the boundary algebra $\sA_R$ associated with $R$.\footnote{$\sA_R$ is defined by taking appropriate large $N$ limit and contains operators generated by modular flow~\cite{LeuLiu22}.} The geometry of $\sW_R$ defines a state $\psi$ on $\sA_R$. 

The gravitational dual of the canonical purification of $\psi$ is obtained by CPT-conjugating the entanglement wedge around its QES $\chi$ and gluing the CPT-conjugated wedge to the original wedge across $\chi$~\cite{EngWal18}. See~\cite{DutFau19} for a path integral justification of this protocol. 
See Fig.~\ref{fig:cpex} for some examples and subtleties. In general, the construction leads to a spacetime whose asymptotic boundaries are given by two copies of $R$.   We will denote the new copy as $\widetilde R$. The state corresponding to the resulting spacetime 
is by definition a pure state on the system $R \widetilde R$.

\subsection{Formulation} 

We now proceed to discuss multipartite states. In such cases, we shall need to invoke the canonical purification because a straightforward application of the proposal of the last section immediately breaks down, as illustrated by the following examples: 

\ben 

\begin{figure}
    \centering
    \includegraphics[width=0.8\textwidth]{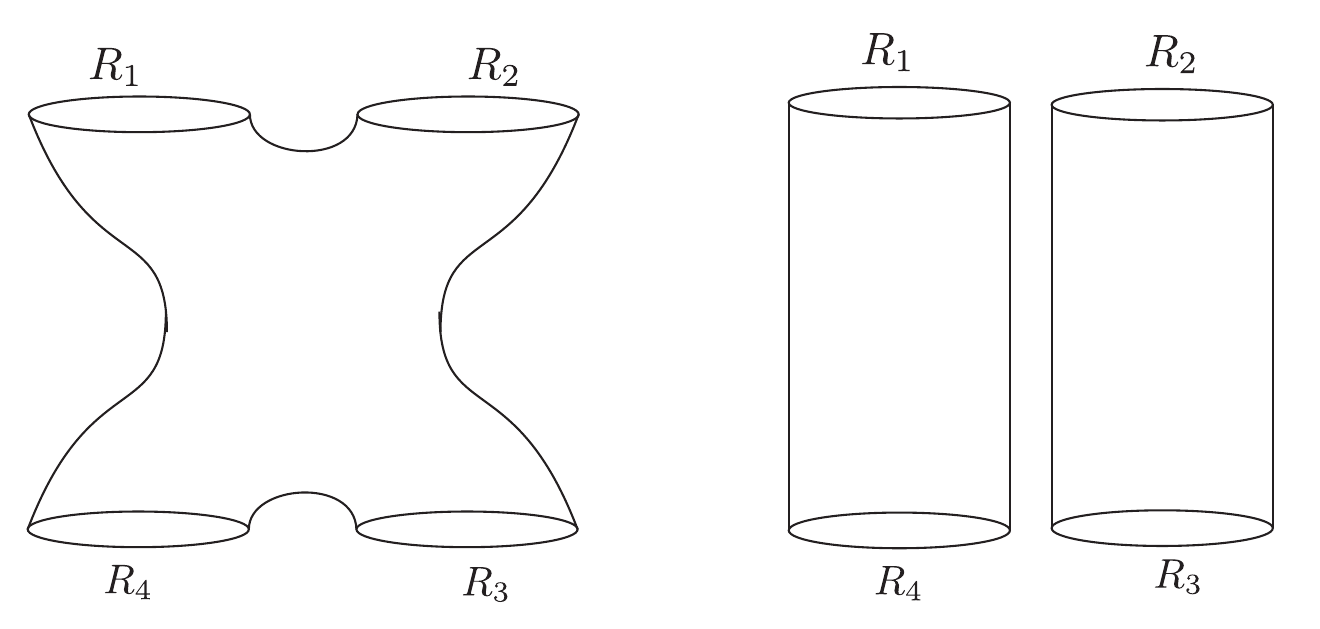}
    \caption{Two examples of Cauchy slices of four-boundary spacetimes: on the left, the spacetime is full connected, and on the right it has two disconnected components. The algebras of operators associated to $R_{1}$ and to $R_{2}$ are type III$_{1}$ in both Cauchy slices above, even though $R_{1}$ and $R_{2}$ are connected in the left figure and disconnected in the right.}
    \label{fig:R1R2R3R4}
\end{figure}

\item 
Let $\ket{\psi_{R_{1}R_{2}R_{3}R_{4}}}$ be a 4-party state satisfying the classical condition where each of the $R_{i}$ is a complete $\mathscr{I}$, and
the ${\cal A}_{R_{i}}$ are all type III$_{1}$. By our arguments from the previous section, the $\sW_{R_{i}}$ all have boundaries in the interior of the spacetime. However, this boundary may be shared (on some Cauchy slice) with just one other $\sW_{R_{i}}$ or with all the remaining three. In particular,  just knowing that ${\cal A}_{R_{1}}$ and ${\cal A}_{R_{2}}$ are both type III$_{1}$ is insufficient to deduce that $\sW_{R_{1}\cup R_{2}}$ connects $R_{1}$ to $R_{2}$, since it is possible that $R_{1}$ is in fact connected to $R_{3}$ and $R_{2}$ is connected to $R_{4}$. See Fig.~\ref{fig:R1R2R3R4} for an illustration.

\item Consider now a proper subregion of $\mathscr{I}$. By definition, its entanglement wedge is also a subregion, and thus the associated algebra will always be type III$_1$ in the $G_N \to 0$ limit, irrespective of connectivity properties.\footnote{Type III$_1$ here is emergent in the $G_N \to 0$ limit, and should be distinguished from type III$_1$ algebra of a subregion at finite $N$. We can imagine always putting the boundary theory on lattice so that all subalgebras associated with a subregion are type I at finite $N$.}
 An immediate example of this is given by two intervals on $S_1$ in a two-dimensional CFT 
in the vacuum state: whether the entanglement wedge is connected or not, the algebra associated to the state on each subregion will be type III$_{1}$ and the state will satisfy the classical condition. 

\een

It turns out that there is a simple unified way to address these complications: we isolate the regions of interest and purify the state via the canonical purification construction reviewed above. 

More explicitly, consider the example of item 1. To illustrate the logic with clarity, we will temporarily assume that the state in question satisfies the classical condition; we will drop this assumption prior to stating our full proposal. Isolate the subsystem $R_1 \cup R_2$ and consider its canonical purification $\ket{\psi_{R_{1}R_{2}\widetilde{R_{1}R_{2}}}}$ with corresponding gravitational dual $\sW_{R_{1}R_{2}\widetilde{R_{1}R_{2}}}$.  If ${\cal A}_{R_{1}}$ and ${\cal A}_{R_{2}}$ are both type III$_{1}$ factors, we have -- even after we have assumed that the bulk volumes, areas, and lengths do not scale with $G_{N}^{a<0}$ -- two possibilities for $\sW_{R_{1}R_{2}\widetilde{R_{1}R_{2}}}$: either $R_{1}, R_{2}, \widetilde{R_{1}}$, and $\widetilde{R_{2}}$ are \textit{all} connected to one another (i.e. a single connected four-boundary spacetime) or $R_{1}$ is connected to $\widetilde{R_{1}}$ but not to $R_{2}$ and $\widetilde{R_{2}}$, and $R_{2}$ is connected to $\widetilde{R_{2}}$ \footnote{Note that $R_{1}$ cannot be connected to $\widetilde{R_{2}}$ and not to $R_{2}$ since the CPT gluing must be done at the boundary of the entanglement wedge.} In the former case, $\sW_{R_{1}}$ and $\sW_{R_{2}}$ share part of a boundary in the bulk interior; in the latter case, they do not. In the former case, 
the entanglement wedge $\sW_{R_1 \cup R_2}$ of $R_1 \cup R_2$ connects $R_{1}$ to $R_{2}$, while in the latter case it does not. See Fig.~\ref{fig:MultiCPT}. 
The distinguishing boundary feature between the two cases is the algebra ${\cal A}_{R_{1}\widetilde{R_{1}}}$ associated with the canonical purification $\sW_{R_{1}R_{2}\widetilde{R_{1}R_{2}}}$ (not from the canonical purification of $R_1$).\footnote{{There is a potential subtlety in the canonical purification of subregions: if there is a cusp at the gluing, operators localized to the cusp may be ill-defined. We expect that there is some limiting procedure that regulates this issue; in the absence of such a definition, we may choose to implement some lattice spacing or restrict to regions which glue onto their CPT conjugates in a differentiable fashion.}} When $\sW_{R_1 \cup R_2}$  connects $R_{1}$ to $R_{2}$, ${\cal A}_{R_{1}\widetilde{R_{1}}}$ is type III$_{1}$, since there must be an interior spacetime boundary onto which $\sW_{R_{2}}$ can connect to $\sW_{R_1 \cup \widetilde{R_{1}}}$. When ${\cal A}_{R_{1}\widetilde{R_{1}}}$ is type I, $\sW_{R_1 \cup R_2}$ fails to connect $R_{1}$ to $R_{2}$.

\begin{figure}
    \centering
    \includegraphics[width=\textwidth]{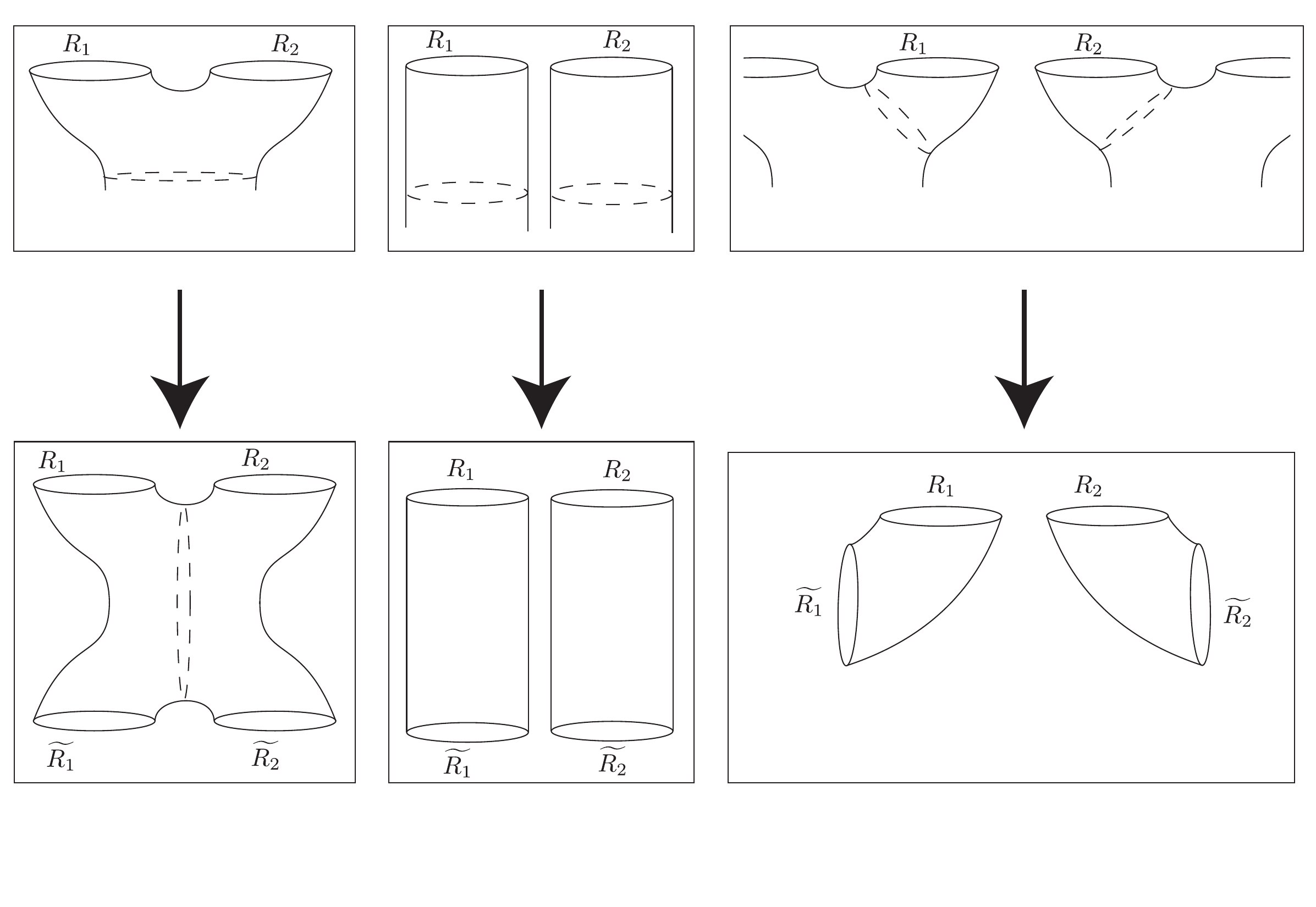}
    \caption{The gravitational dual of the canonical purification of $\psi_{R_{1}R_{2}}$ in several different setups. On the left, $\psi_{R_{1}R_{2}}$ is dual to a connected wedge, and the algebra ${\cal A}_{R_{1}\widetilde{R_{1}}}$ has a bulk interior boundary and is therefore type III$_{1}$. In the middle, $\psi_{R_{1}R_{2}}$ is dual to a disconnected wedge, and each $\psi_{R_{i}}$ is in fact in a bipartite state with another system. ${\cal A}_{R_{1}\widetilde{R_{1}}}$ is not type III$_{1}$. On the right, $R_{1}$ and $R_{2}$ are dual to disconnected wedges and are not in a bipartite state with another system, but the canonical purification still serves as a diagnostic: ${\cal A}_{R_{1}\widetilde{R_{1}}}$ is note type III$_{1}$.}
    \label{fig:MultiCPT}
\end{figure}

The above considerations then motivate a proposal entirely analogous to the bipartite case by ``bipartiting'' the multipartite case.

\paragraph{Proposal for General (Holographic) States:} 
Let $\sW_{R_1 \cup R_2}$ be the entanglement wedge dual to $\psi_{R_1 \cup R_2}$, and $\sW_{R_{1}R_{2}\widetilde{R_{1}R_{2}}}$
be the gravitational dual to the canonical purification of $\psi_{R_1 \cup R_2}$. $\sA_{R_i \widetilde{R_{i}}}$ with $i=1,2$ denotes the algebra for $R_i \cup \widetilde{R_{i}}$ resulting from the purification.\footnote{It is critical here that $A_{R_{i}\widetilde{R_{i}}}$ is \textit{not} obtained from the canonical purification of $R_i$  but rather from canonically purifying $R_{1}\cup R_{2}$.}
We assume that $\sW_{R_1 \cup R_2}$ and $\sW_{R_{1}R_{2}\widetilde{R_{1}R_{2}}}$ are semiclassical, i.e. classical or quantum volatile. 

\textit{\begin{enumerate}[I.]
\item $\sW_{R_1 \cup R_2}$ fails to connect $R_{1}$ to $R_{2}$ if 
	 ${\cal A}_{R_{1}\widetilde{R_{1}}}$ and ${\cal A}_{R_{2}\widetilde{R_{2}}}$ are both type I factors. 
	\item  $\sW_{R_1 \cup R_2}$ classically connects $R_{1}$ to $R_{2}$ if and only if  ${\cal A}_{R_{1}\widetilde{R_{1}}}$ and ${\cal A}_{R_{2}\widetilde{R_{2}}}$ are both type III$_{1}$ factors and furthermore the classical condition holds in $\psi_{R_{1}R_{2}}$.
	\item $\sW_{R_1 \cup R_2}$ quantum connects $R_{1}$ to $R_{2}$  if ${\cal A}_{R_{1}\widetilde{R_{1}}}$ and ${\cal A}_{R_{2}\widetilde{R_{2}}}$ are not type I and the classical 
	 condition is violated in $\psi_{R_{1}R_{2}}$.	
\end{enumerate}}\

The reasoning for this proposal follows from the definition of the gravitational canonical purification protocol discussed in Appendix~\ref{sec:grpu}:
$\sW_{R_{1}R_{2}\widetilde{R_{1}R_{2}}}$ describes a  purified state  $\ket{\psi_{R_{1}R_{2}\widetilde{R_{1}R_{2}}}}$ that can be thought of as a bipartite state on two systems: $R_{1}\cup\widetilde{R_{1}}=B_{1}$ and $R_{2}\cup\widetilde{R_{2}}=B_{2}$. 
When $B_{1}$ and $B_{2}$ are complete $\mathscr{I}$'s, this reduces precisely to the bipartite case of Sec.~\ref{sec:prop}. 
The only exception (to $B_1, B_2$ being edgeless) is when $R_1$ and $R_2$ are connected themselves.
In this case $B_1$ and $B_2$ are subregions 
on a single boundary, and $\sA_{B_i}$ are both type III$_1$ and the finiteness condition holds, so the proposal is trivially satisfied.

\subsection{Connectivity between the island and Hawking radiation} \label{sec:aev}

As an example of application of multipartite ER=EPR we revisit the evaporating black hole and consider the connectivity between the island and the {\it Hawking} part of the reservoir, which is what we shall call the subset of the reservoir ignorant of the black hole interior (see e.g.~\cite{AkeEng22}).

Consider a time after the Page time $t > t_P$, when the QES for $B$ lies slightly behind the event horizon. 
As is evident from the bulk, the fundamental description of $R$ naturally separates into two parts~\cite{AEMM, Pen19, AlmMah19a},  
\be \label{ennp}
R = R_{\rm Hawk}\cup I 
\ee
where $R_{\rm Hawk}$ denotes the semiclassical part of the reservoir defined through tracking Hawking radiation from the black hole at the semiclassical level (and lives in a half-Minkowski spacetime), and $I$ denotes the part (island) lying behind the QES in the interior of the black hole, which is in the entanglement wedge of the reservoir. See Fig.~\ref{fig:IslHawk}. Because the bulk domains of dependence are disjoint, the algebras are separate as well: denoting the algebras associated with $R_{\rm Hawk}$ and $I$ respectively as $\sA_{R_{\rm Hawk}}$ and $\sA_{I}$, we can write~\eqref{ennp} algebraically as 
\be \label{yer0} 
\sA_R = \sA_{R_{\rm Hawk}} \lor \sA_{I}, \quad t > t_P \ .
\ee
By contrast, before the Page time we simply have 
\be\label{yer}
\sA_R = \sA_{R_{\rm Hawk}}, \quad t < t_P \ .
\ee

\begin{figure}
    \centering
    \includegraphics{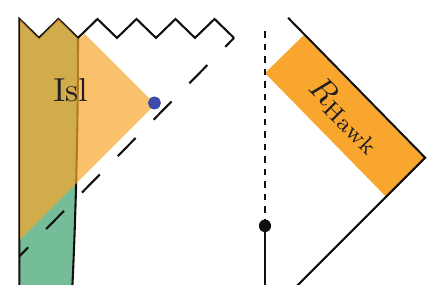}
    \caption{The decomposition of $R$ into the Hawking part and the island.}
    \label{fig:IslHawk}
\end{figure}

$\sA_{I}$ is type III$_1$ as it is the algebra associated with a bulk subregion. $I$ is connected to $B$ classically through the QES. {(This follows from the canonical purification of $B$, even though $I$ is quantum volatile.)} We may ask how and whether 
the island is connected to $R_{\rm Hawk}$. Since a pure state is defined only for the union $R_{\rm Hawk} \cup I \cup B$, this question 
concerns a multipartite state.  Intuitively, given (a) the $\sO(G_N^{-1})$ entanglement between $R_{\rm Hawk}$ and $I$, (b) the predictions originally expressed in~\cite{Van13, MalSus13} as a way of relaxing the firewall problem, and (c) the fact that $R_{\rm Hawk}$ and $I$ are \textit{not} classically connected, we expect that they are connected by a quantum wormhole. We will show that this is indeed the case following the multipartite algebraic ER=EPR proposal.

According to the proposal, we need to construct the gravitational canonical purification $\sW_{R_{\rm Hawk} I  \widetilde{R_{\rm Hawk} I}}$ for $R = R_{\rm Hawk} \cup I$. $\sW_{R_{\rm Hawk} I  \widetilde{R_{\rm Hawk} I}}$ is built via CPT conjugation around the minimal QES -- the QES given by the union of the boundary of the island and the AdS boundary. The result is a semiclassically disconnected spacetime: Minkowski spacetime (with stress tensor shocks) and a baby universe (BU), see Fig.~\ref{fig:BabyUniverse},  with the corresponding algebras $\sA_{\rm BU} 
= \sA_{I \widetilde I}$ and $\sA_{\rm Mink} = \sA_{R_{\rm Hawk} \widetilde {R_{\rm Hawk}}}$. 
 The baby universe and the resulting Minkowski spacetime each contain a inextendible Cauchy slice, but the volume of a Cauchy slice of the baby universe is divergent in $G_{N}^{-1}$: we expect that the algebras ${\cal A}_{BU}$ and 
 $\sA_{\rm Mink}$ are type III$_{1}$. It then follows from multipartite algebraic ER=EPR that the island has a quantum wormhole to the radiation.

\begin{figure}
    \centering
\includegraphics[width=0.8\textwidth]{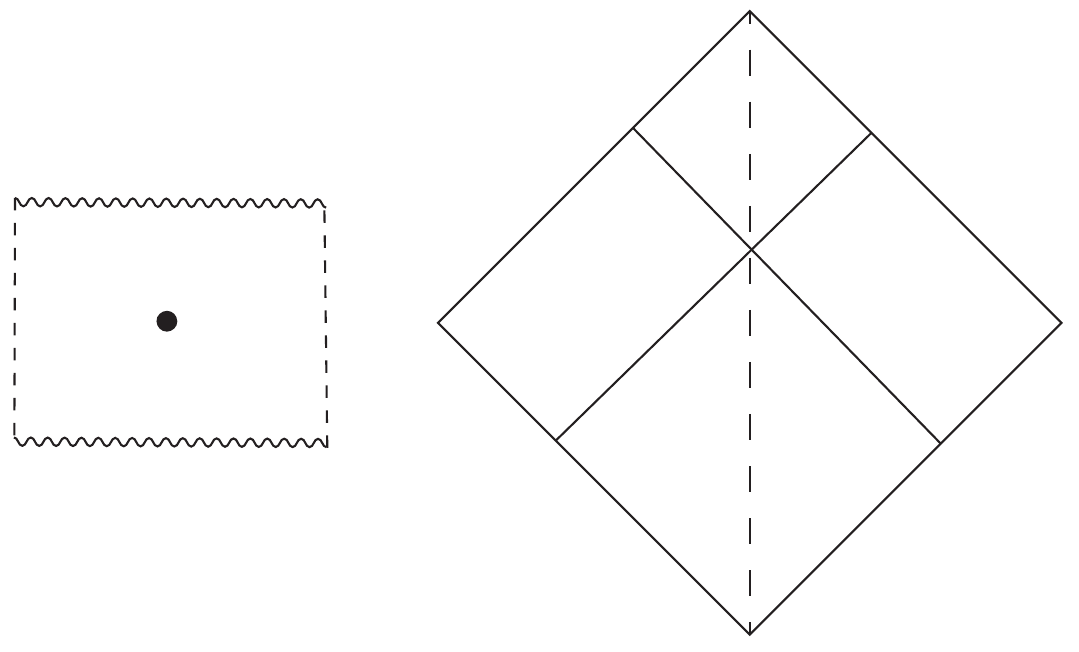}
    \caption{The canonical purification of the reservoir (obtained by decoupling from $B$ and evolving with the decoupled Hamiltonian of $R\cup \widetilde{R}$) after the Page time has two components connected by a quantum wormhole: on the left, the baby universe obtained via CPT conjugation of the island (the black dot is the QES of each side); on the right, the Minkowski region. }
    \label{fig:BabyUniverse}
\end{figure}
 
 So far our discussion has relied on the bulk picture, where there is a clean separation between $R_{\rm Hawk}$ and $I$. To have a genuine boundary description of the story, however, we should also give an intrinsic boundary definition of $R_{\rm Hawk}$ and $I$ as parts of $R$. 
 This can be done using the concept of complexity using ideas from~\cite{HarHay13, EngWal17b, EngWal18, BroGha19, EngPen21a, EngPen21b, AkeEng22}, which we will discuss in detail in Sec.~\ref{sec:trans}. 
 The basic picture is that  ${\cal A}_{I}$ is the algebra of complex operators in $R$ while $\sA_{R_{\rm Hawk}}$ consists of the algebra of simple operators in $R$ in the large-$N$ limit. Both of them are type III$_1$, but neither of them is defined geometrically on the boundary. So here the quantum wormhole is between the simple and complex sectors of the reservoir! This example thus illustrates the power of the algebraic formalism in capturing the bulk physics.

\section{Transition from Complexity Transfer} \label{sec:trans} 

We now turn to the question of what happens at the Page time posed in Sec.~\ref{sec:intro}. 
From the discussion of the previous section, we see that at the Page time there is a transition from $B$ and $R$ being quantum connected  to being classically connected. We would like to understand whether there exists a more fundamental way to think about the emergence of various properties at the Page time. In this section we will argue that the ``transition'' at the Page time can be interpreted as an algebraic realization of Harlow-Hayden: a transfer of complexity between $B$ and $R$ systems. In fact, this complexity transfer is a general phenomenon occurring in many other contexts, specifically when there is a switchover in dominance between QESs. 

\subsection{The Complex and Simple Factors}\label{sec:complex}

\begin{figure}
    \centering
    \includegraphics{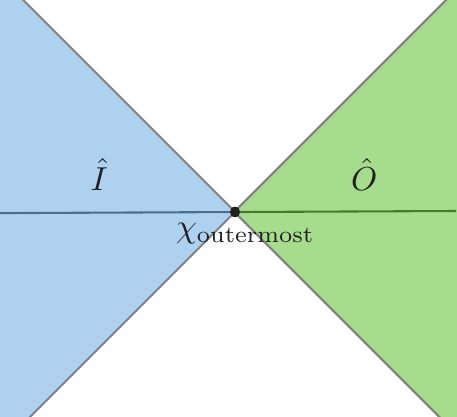}
    \caption{The decomposition of the entanglement wedge into $\hat{O}$ and $\hat{I}$, the outer wedge (simply reconstructible) and the interior of $\chi_{\rm outermost}$, which has high reconstruction complexity.}
    \label{fig:IOdecomp}
\end{figure}

Consider a boundary subsystem $Q$, which can be edgeless or with edge, whose entanglement wedge is $\sW_Q$. 
In addition to the minimal QES $\chi_Q$ that defines $\sW_Q$, $\sW_Q$ may contain other non-minimal QESs.  
We denote the one that is closest to the boundary (i.e. is contained in the wedge of all other QESs of $Q$) as $\chi_{\rm outermost}$\footnote{An outermost QES always exists~\cite{EngPen21a, EngPen23}, though it may be the empty set.}. 
It can be shown~\cite{EngPen21a, EngPen23} using maximin techniques~\cite{Wal12, AkeEng19b} that $\chi_{\rm outermost}$ is locally minimal on some Cauchy slice, and that furthemore between $\chi_R$ and $\chi_{\rm outermost}$ there must exist a QES $\chi_{\rm bulge}$ which is locally maximal on a Cauchy slice.\footnote{There are some subtleties in the presence of multiple such bulges; see~\cite{EngPen21b, EngPen23} for a discussion. } 
Let $\hat{O}$ be the wedge defined by the QES $\chi_{\rm outermost}$ and the boundary region for which it is a QES, and {let $\hat{I}$ be the domain of dependence between $\chi_{\rm outermost}$ and $\chi_{Q}$.} See Fig.~\ref{fig:IOdecomp}. The former was termed the simple wedge by~\cite{EngWal17a, EngWal18};  the latter is the so-called Python's Lunch, due to its shape on the Cauchy slice on which $\chi_{\rm outermost}$ is minimal and $\chi_{\rm bulge}$ is maximal. See Fig.~\ref{fig:PL}

\begin{figure}
    \centering
    \includegraphics[width=0.8\textwidth]{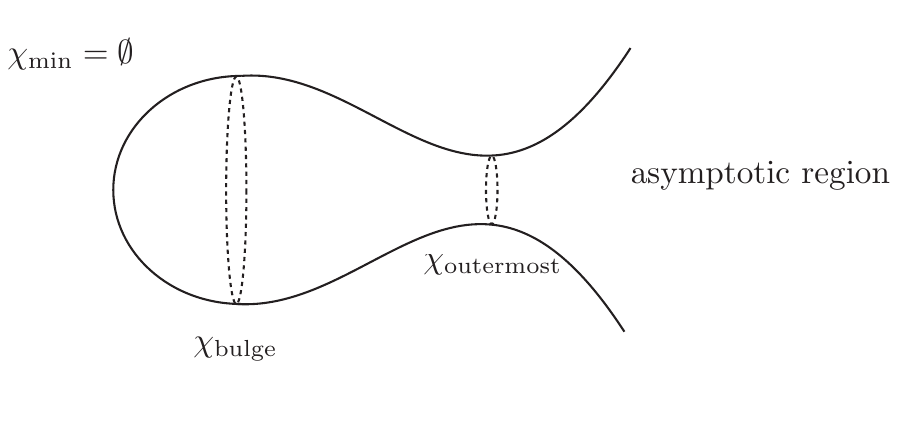}
    \caption{The outermost QES (sometimes called the throat or appetizer) and the bulge of a Python's Lunch on a Cauchy slice of a single-sided geometry.}
    \label{fig:PL}
\end{figure}

According to the strong Python's Lunch proposal~\cite{BroGha19,EngPen21a, EngPen21b}, operators in the simple wedge $\hat O$ 
can be reconstructed with subexponential complexity in the size of the code subspace on which the reconstruction is taking place, often taken to be $S$, the entropy of the black hole in question. For our purposes here, the relevant parameter is $G_{N}$: when we say that reconstruction complexity is subexponential, we mean that its scaling with $G_{N}$ is subexponential. Operators in the Python's Lunch, by contrast, are hard to reconstruct, with reconstruction complexity scaling as
\be\label{cx}
 \sC \propto \exp \left[\frac{S_{\rm gen}[\chi_{\rm bulge}]- S_{\rm gen}[\chi_{\rm outermost}]}{2} \right]
 \ .
\ee
Generically the bulge and outermost QESs differ in $S_{\rm gen}$ at leading order, so the reconstruction complexity of the lunch is generically exponential.

Denote the bulk operator algebras in regions $\hat O, \hat I$ and $\sW_Q$ respectively as $\sA_O, \sA_I$ and $\sA_{\sW_Q}$. 
We have 
\be \label{des1}
\sA_{\sW_Q} = \sA_O \lor \sA_I \ .
\ee
Since $O$ and $I$ are bulk subregions, both $\sA_O$ and $\sA_I$ are type III$_1$ factors. 
Subalgebra-subregion duality guarantees that for every domain of dependence in the bulk there is an associated boundary algebra~\cite{LeuLiu22}.  Generally, though, there is no clear way to ascertain which factor corresponds to a given domain of dependence, and in particular, it may not have a geometric interpretation in the boundary theory. $\sA_{\sW_R}$ can be identified with the large $N$ limit\footnote{Note that this limit is state-dependent.} of the boundary subalgebra $\sX_R$ associated with $R$. 
The boundary identifications of $\sA_O$ and $\sA_I$ are more subtle. We will denote the boundary subalgebras identified with them respectively as 
\be
\sX_C = \sA_I, \quad \sX_S = \sA_O 
\ee
and equation~\eqref{des1} can also be written as 
\be \label{comD}
\sX_Q = \sX_C \lor \sX_S  \ .
\ee

The Python's lunch proposal implies that $\sX_S$ is a subalgebra consisting of only simple operators (i.e. those with sub-exponential complexity) and $\sX_C$ is a subalgebra of exponentially complex operators. It is  striking that gravity ``predicts'' that 
simple and complex operators form type III$_1$ algebras and the full algebra $\sX_Q$ associated with $Q$ can be  generated by 
a union of them. The structure~\eqref{comD} is universal and generically applicable to any subsystem with a non-minimal QES. 

We conclude this subsection with some further remarks on the nature of  $\sX_S$ and $\sX_C$, clarifying the connection between this algebraic language and the code subspace formulation:

\ben 

\item We say an operator $A$ is simple if it is given by sums of finite (in the $N \to \infty$ limit) products of single-trace operators or a limit of such operators.\footnote{For example, operators of the form $e^{i  \int d^d x\, f(x) {\cal O} (x)}$ where $f$ is an $O(1)$ real function and ${\cal O} (x)$ is a single-trace Hermitian operator can be obtained as a limit and thus are also simple operators.} 
A single-trace operator is dual to a bulk elementary field. Thus simple operators are related to finite products of bulk fields or their limits. $\sX_S$ consists of such operators. 

\item When $Q$ is edgeless, from the original bulk geometry $M$, it is possible to construct a new geometry $M'$  using simple operations (i.e. acting by bulk fields and boundary evolutions by $O(1)$ times), where the simple wedge $\hat O$ coincides with the causal wedge of $Q$~\cite{EngPen21a}. This gives an explicit demonstration that $\sA_O$ can be reconstructed from $Q$ sub-exponentially.  
Since we can CPT conjugate around $\chi_{\rm outermost}$, there also exists a geometry $M_{\rm simple}$ with the same $\hat{O}$ but where $\hat{O}=\sW_{Q}$. Physically, $M_{\rm simple}$  describes a coarse-grained state $\psi_{\rm simple}$ of the original state of the system~\cite{EngWal17b, EngWal18}. 

\item $\sX_Q = \sA_{\sW_Q}$ is obtained by taking the large $N$ limit of the operator algebra $\sB^{(N)}_Q$ in $Q$~\cite{LeuLiu22}. 
The limit is subtle and state-dependent: for different semiclassical states, different sets of operators in $\sB^{(N)}_Q$ 
may have well-defined correlation functions in the limit, and thus different sets survive the limit. 
The subset of simple operators that are dual to bulk fields always survive, independent of the specific states. But in $\sX_Q$ there can be complex operators generated by modular flows that are highly state-dependent. $\sA_I$ includes those. 

We can also define $\sA_I = \sX_C$ intrinsically in the boundary theory as follows $\sX_C = \sX_Q \cap \sX_S'$, where $\sX_S$ is the algebra of simple operators completed in the weak limit. 
We expect that in the large-$N$ limit this definition of $\sX_C$ captures complex operators as defined via the code subspace formalism.

Note that whether an operator is complex also depends on the choice of $Q$. An operator that can only be generated by modular flows in $Q$ could be a simple operator in a larger subsystem.

\een

\subsection{What happens at the Page time: dynamical transfer of complexity} \label{sec:pady}

Let us now return to the  evaporating black hole first discussed in Sec.~\ref{sec:quan1}. Consider first a time $t_{c} < t < t_P$, when the empty set $\chi_0$ is the minimal QES and there is another locally but not globally minimal QES $\chi_1$ located slightly behind the horizon. 
Taking a Cauchy slice $\Sigma_B$ of $\sW_{B(t)}$ passing through $\chi_1$, we denote the region interior and exterior to $\chi_1$ respectively as $I$ and $O$, i.e. 
\be 
\Sigma_{B} = I \cup O, \quad t < t_P \ .
\ee
From the discussion of last subsection, $\hat I=D[I]$ is a Python's lunch and $\hat O=D[O]$ is a simple wedge, and we have  
\be \label{ex1}
\sA_{B} = \sA_O \lor \sA_I, \quad \sA_R = \sA_{R_{\rm Hawk}} , \quad t < t_P
\ee
and $\sA_O$ is a simplex factor and $\sA_I$ is a complex factor. While~\eqref{cx} can always be defined, $C$ becomes exponentially large in $1/G_N$ only after some later time scale $t_{c1} > t_c$, which we expect is a finite fraction of $t_P$.  In the current case, $\log \sC (t) \sim \frac{1 }{2}
\frac{A_{0}-A_{1}}{4 G_{N}}$ where $A_{0}$ is the area of the black hole at the beginning of evaporation and $A_{1}$ is the area of the black hole at time $t$. So for there to be an ${\cal O}(G_{N}^{-1})$ difference and thus exponential complexity in $G_{N}$, the difference in areas must be be ${\cal O}(1)$ at $t_{c_1}$. We will always consider $t > t_{c1}$ below. 
In the second equation we have also copied here~\eqref{yer}. As discussed in Sec.~\ref{sec:aev}, $R_{\rm Hawk}$ corresponds to the part of the radiation that can be expressed in terms of radiated  modes (e.g. gravitational radiation) outside the horizon -- the part typically associated to Hawking's calculation and its ignorance of unitarity. Thus~\cite{AkeEng22} $\sA_{R_{\rm Hawk}}$ consists of simple operators. At a time $t_P > t > {\rm max} (t_b, t_{c1})$, $\sA_R = \sA_{R_{\rm Hawk}}$ is hence a simple type III algebra. 

Now consider $t> t_P$, which marks the time at which $\chi_1$ becomes the minimal QES for both $B(t)$ and $R(t)$. We then have 
\be \label{ex2}
\sA_{B} = \sA_O, \quad \sA_R = \sA_{R_{\rm Hawk}} \lor \sA_{I}, \quad t > t_P
\ee
 where in the second equation we have also copied here~\eqref{yer0}. 
 Region $\hat I$ now belongs to $\sW_{R(t)}$ and is a Python's lunch for $R (t)$.
 So now $\sA_I$ is a complex type III$_1$ factor of $\sA_R$, and $\sA_{R_{\rm Hawk}}$ remains a simple type III$_{1}$ factor. 
 That $\sA_I$ lies in ${\cal A}_{R}$  is essentially the Hayden-Preskill process~\cite{HayPre07}; that its reconstruction is complex is simply the statement of Harlow-Hayden~\cite{HarHay13} that decoding the Hawking radiation is a high complexity task.

Now comparing~\eqref{ex1} and~\eqref{ex2}, we see that at the Page time $t_P$, there is a transition arising from a transfer of complexity factor $\sA_I$ from the black hole system $B$ to the reservoir system $R$. All our previously noted aspects of change in the physics of the system at $t_P$ can be understood from the transfer of complexity:\footnote{The conclusions of the first two items below are of course not new. Here we merely reexamine them from the current perspective.}

\ben 

\item 
Due to this exchange, the entanglement between $B$ and $R$ transitions from that between $R_{\rm Hawk}$ and $I$ -- which is proportional to the size of $R_{\rm Hawk}$ -- to that between $O$ and $I$, which is proportional to the size of the black hole. This leads to the turnover as the size of $R_{\rm Hawk}$ increases with time while that of the black hole horizon decreases with time. 

\item This is immediately related to the black hole becoming transparent:  the difficulty of reconstructing the interior of a black hole before the Page time is transferred into the difficulty of decoding the Hawking radiation after the Page time. 

\item Both before and after the Page time, $R_{\rm Hawk}$ and $I$ are connected by a quantum wormhole, and $O$ and $I$ are connected by a classical wormhole. Due to the exchange of $\sA_I$, the connection between $B$ and $R$ transitions from being between $R_{\rm Hawk}$ and $I$ to being between $I$ and $O$.

\een  

It is natural to ask why only the complexity factor is exchanged between $B$ and $R$. The reason is simple: the physics of simple factors can be directly probed by low-energy subexponential complexity  operations, which have a smooth limit in the $G_N \to 0$ limit. Such dynamics must be smooth with time in this limit.  By contrast, probing $I$ requires exponentially complex (in $1/G_N$) operations, which do not have to have a smooth $G_N \to 0$ limit with time. In fact, we will argue below that the common discontinuous `phase transition' between dominant QESs is a consequence of the phenomenon of complexity transfer, and is therefore rather ubiquitous. 
 
\subsubsection*{The Two-Sided Evaporating Black Hole's Canonical Purification}

 \begin{figure}[t]
\begin{center}
\includegraphics[width=\textwidth]{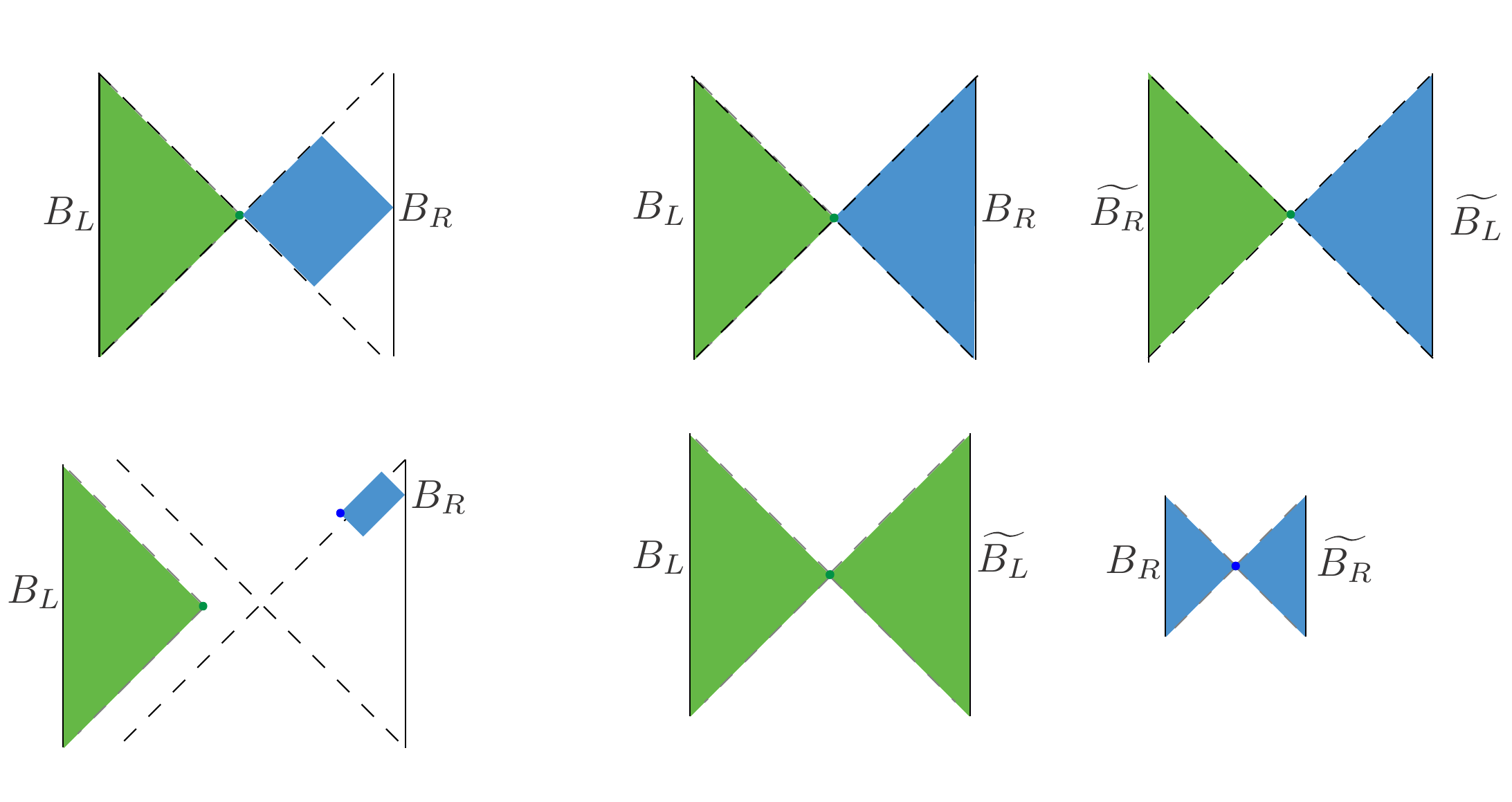} 
\end{center}
\caption{Canonical purification of the evaporating two-sided black hole before the Page time (top) and after the Page time (bottom). Here as usual we decouple from the path and evolve with the decoupled Hamiltonian on $B_{L}B_{R}$. In the former, $B_{L}$ and $B_{R}$ are connected; in the latter, $B_{R}$ and $\widetilde{B_{R}}$ are connected, and similarly for $B_{L}$ and $\widetilde{B_{L}}$.}

 \label{fig:twosidedCP}
\end{figure} 

As another example of the complexity factor transfer and its application in a multipartite setting, let us discuss a four-party version of the two-sided evaporating black hole first   originally presented in~\cite{EngFol22} (where the detailed construction of this spacetime may be found). We consider the same setup as in AEMM, this time without an end of the world brane to make it one-sided. That is, we have a two-sided black hole, where one side is at some finite boundary time coupled to a reservoir, and then trace out the reservoir. The result is a mixed state on $B_L$, the left boundary, and $B_R$, the right boundary.

Consider now building a four-party system via the canonical purification of $\psi_{B_{L}B_{R}}$ into $\ket{\psi_{B_{L}B_{R}\widetilde{B_{L}B_{R}}}}$.\footnote{To be clear, here we will be interested in the canonical purification in its own right, not as a tool towards understanding $\psi_{B_{L}B_{R}}$'s properties. }  In our discussion below, we will fix the time $t_L$ for $B_L$ to be $O(1)$, and consider how the connectivity changes with the time $t_R$; the time below will always refer to $t_R$.

We now present the connectivity puzzle in this context: before the Page time, the QES of $\psi_{B_{L}B_{R}}$ is the empty set, and $\sW_{B_{R}\cup B_{L} \cup \widetilde{B_{L}\cup B_{R}}}$ consists of two spacetimes that are classically disconnected, illustrated in Fig.~\ref{fig:twosidedCP}: $B_{R}$ and $B_L$ are classically connected to one another and similarly $\widetilde{B_R}$ and $\widetilde{B_L}$. After the Page time, there are still two classically disconnected spacetimes, but now the two spacetimes classically connect $B_R$ to $\widetilde{B_R}$ and $B_L$ to $\widetilde{B_L}$. However, we can pick $t_R$ at the two times so that the von Neumann entropies of any of the subsystems are identical. Thus even though classical connectivity swaps between the two pairs, the von Neumann entropy fails to provide an accurate diagnosis. 

If we consider $\ket{\psi_{B_{L}B_{R}\widetilde{B_{L}B_{R}}}}$ as a bipartite state on the systems $B_{R}\widetilde{B_{R}}$ and $B_{L}\widetilde{B_{L}}$, we can examine connectivity of these two systems from the algebraic perspective: the algebra of each system is type III$_1$ both before and after the transition, as long as we are at $t_R\sim {\cal O}(G_{N}^{-1})$. (The same is true if we consider $\ket{\psi_{B_{L}B_{R}\widetilde{B_{L}B_{R}}}}$ as a bipartite state on the systems $B_{R}B_{L}$ and $\widetilde{B_{L}}\widetilde{B_{R}}$ instead.)
Algebra type is similarly unable to distinguish the difference in connectivity. However, the spacetime type is distinguishing: at $t<t_P$, $B_R$ and $\widetilde{B_{R}}$ are quantum volatile while $B_L$ and $\widetilde{B_{L}}$ are classical. Because the canonical purification of $B_{L}$ is classical, the connectivity between $B_L$ and $B_R$ is also classical. Since $\widetilde{B_{R}}$ and $B_R$ are \textit{both} quantum volatile, they are only quantum connected. Finally, $B_L$ and $\widetilde{B_{L}}$ are disconnected. By contrast, shortly after $t_P$, $B_{L}$ and $B_{R}$ are disconnected (since $B_{R}$ remains classical, and classical spacetimes cannot be quantum connected); $B_{L}$ and $\widetilde{B_{L}}$ are classically connected; and $B_{R}$ and $\widetilde{B_{R}}$ are classically connected. 

What microscopically accounts for the myriad of different ways in which the four systems are connected or disconnected? As in the simpler case of a single-sided black hole, the answer is complexity transfer. Before the Page time, there is a high complexity factor in $B_{R}$. 
After the Page time, the complexity factor is transferred to the radiation and is gone from the subsystem $B_L\cup B_R$. 
As a  result the canonical purification after the Page time has only simple operators on each side.

\subsection{General formulation} 

We now show that a generic\footnote{The genericity alluded to here is to exclude cases where the difference in generalized entropies between the bulge and the appetizer is an ${\cal O}(1)$ number of bits.}  exchange of dominance of QESs can be understood physically as a consequence of the transfer of a complex factor, so the transition at the Page time discussed above  is representative of a more general phenomenon.

Consider a boundary subsystem $R_\alpha$ that depends on some continuous parameter $\alpha$ and has 
two QES candidates for the entanglement wedge: $\chi_0 (\alpha),\  \chi_1 (\alpha)$. These QESs compete for dominance, and for definiteness we will take $\chi_1$ to lie outside of  $\chi_0$.  
Suppose that at $\alpha = \alpha_0$, the two families exchange dominance. That is, $\chi_1$ is minimal for $\alpha > \alpha_0$, and $\chi_0$ is minimal for $\alpha<\alpha_{0}$. 
For $\alpha < \alpha_0$, there is a Python's lunch $I$ in $\sW_{R_\alpha}$, which is lost to $\overline{R_{\alpha}}$ for   $\alpha > \alpha_0$. As discussed earlier, by the strong Python's lunch proposal, a Python's lunch gives rise to a complex factor on the boundary, thus depending on the direction in which $\alpha_0$ is crossed, the operator algebra $\sA_R$ for $R_\alpha$ either gains or loses a type III$_1$ complex factor. 
Conversely, the inclusion or exclusion of a type III$_{1}$ factor of high complexity operators signals exchange of dominance of QESs. 
We formulate this more formally as follows.

\paragraph{Proposal:} Let $\{R_{\alpha}\}$ be a continuous family of boundary subsystems, which may include one or more complete connected $\mathscr{I}$'s. 
Whenever there is an exchange of dominance between QESs homologous to $R_{\alpha}$ as $\alpha$ is changed continuously, ${\cal A}_{R_{\alpha}}$  changes to either gain or lose a type III$_{1}$ factor of high complexity operators. Conversely, whenever ${\cal A}_{R_{\alpha}}$ gains or loses a type III$_{1}$ factor of high complexity operators, there is an exchange of dominance between QESs homologous to $R_{\alpha}$. 

\vspace{0.3cm}

The reason that only a factor of complex operators is exchanged is the same as that stated at the end of the previous subsection: only physics concerning simple operations has to be continuous in $\alpha$ in the $G_N \to 0 $ limit. On the gravity side this is reflected in continuous evolution of the outermost QES, even though the minimal QES may jump discontinuously. More explicitly, 
the outermost classical extremal surface always evolves continuously since the equation of extremal deviation is elliptic, so small perturbations of the boundary conditions result in small perturbations of the surface. This rigorous argument does not always survive for QESs, since the equation of quantum extremal deviation is integro-differential~\cite{EngFis19}; nevertheless we expect that typically, without large changes to the system, the outermost QES will likewise evolve smoothly.

\subsection{An example: two intervals in AdS\texorpdfstring{$_3$}{\_3}} 

Consider any two non-overlapping intervals $R_{1}, R_{2}$ on the boundary of a static time slice of global AdS$_{3}$ (with bulk quantum fields in their global vacuum\footnote{So that a QES in this geometry is in fact also a classical extremal surface.}). This is a classical spacetime (in particular, satisfying the classical condition). The entanglement wedge $\sW_{R_{1}\cup R_{2}}$ has two candidate QESs: the union of geodesics that begin and end on the same $R_{i}$, which we shall call $\chi_{\rm disc}$ , and the union of geodesics that begin on one $R_{i}$ and end on the other, which we shall call $\chi_{\rm conn}$. See the top panel of Fig.~\ref{fig:twointervals}.\footnote{Note that the complementary intervals have the same QES due to complementary recovery in the absence of quantum corrections: when $R_{1}\cup R_{2}$ is in the disconnected phase, the complement is in the connected phase and vice versa.}

\begin{figure}
    \centering
    \includegraphics[width=0.7\textwidth]{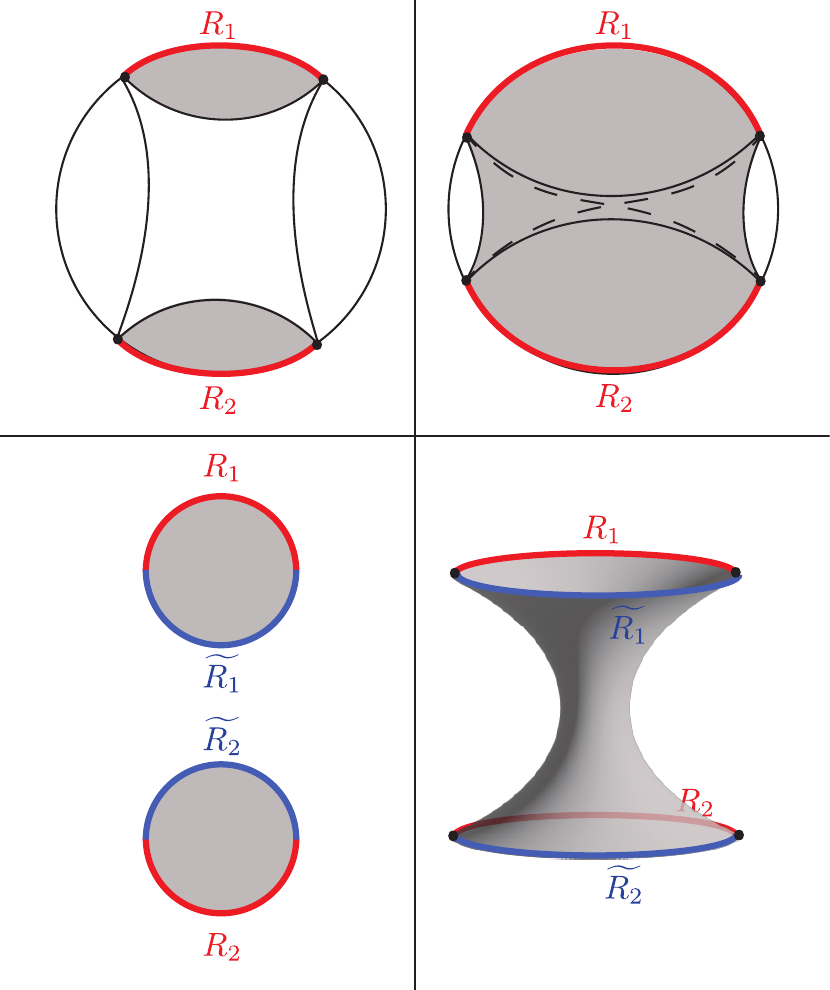}
    \caption{Cauchy slices of the two-interval transition in pure AdS$_{3}$. Top left: the disconnected phase of $R_{1}\cup R_{2}$. Top right: the connected phase of $R_{1}\cup R_{2}$ with the bulge surface shown as two dashed lines. Bottom left: the canonical purification of $R_{1}R_{2}$ in the disconnected phase is two disconnected spacetimes. Bottom right: the canonical purification of $R_{1}R_{2}$ in the connected phase is a single connected spacetime.}
    \label{fig:twointervals}
\end{figure}

When the intervals are small, $\chi_{\rm disc}$ dominates. The entanglement wedge does not connect $R_{1}$ to $R_{2}$. Only one QES lives in $\sW_{R_{1}\cup R_{2}}$ in this regime: reconstruction of operators in $\sW_{R_{1}\cup R_{2}}$ is simple, and the algebra of operators is the algebra of simple operators at large-$N$. As the size of the $R_{i}$ is gradually increased (while maintaining $R_{1}\cup R_{2}=\varnothing$), eventually the second QES $\chi_{\rm conn}$ becomes advantageous; $\sW_{R_{1}\cup R_{2}}$ becomes connected. There is a nonminimal QES -- $\chi_{\rm disc}$ -- in $\sW_{R_{1}\cup R_{2}}$: reconstruction of some operators is exponentially complex. The bulge surface is illustrated in the top right panel of Fig.~\ref{fig:twointervals}. 

Let us examine our connectivity proposal in this example. Since the state on $R_{1}\cup R_{2}$ is mixed, we must first canonically purify the state -- call it $\psi_{R_{1}R_{2}}$ -- into $\ket{\psi_{R_{1}R_{2}\widetilde{R_{1}R_{2}}}}$. When $\chi_{\rm disc}$ dominates, we CPT conjugate about $\chi_{\rm disc}$, which results in two disconnected spacetimes; when $\chi_{\rm conn}$ dominates, the requisite CPT conjugation results in a single connected spacetime. See the bottom panel of Fig.~\ref{fig:twointervals}.

We immediately find that ${\cal A}_{R_{1}\widetilde{R_{1}}}$ cannot be type III$_{1}$ when $\chi_{\rm disc}$ dominates and \textit{must} be type III$_{1}$ when $\chi_{\rm conn}$ dominates. There are many ways to see that it also cannot be type II in the former case (it is obviously not type II in the latter unless we make use of the crossed product construction): the simplest way is to simply note that the geometry is essentially pure AdS$_{3}$, which easily admits a pure state. 

Thus the algebra ${\cal A}_{R_{i}\widetilde{R_{i}}}$ is type I when $\sW_{R_{1}\cup R_{2}}$ is disconnected and type III$_{1}$ when it is connected, precisely matching our proposal.

As already noted above, we see the complexity factor exchange at play as well. For small intervals, reconstruction is simple: it is just HKLL~\cite{HamKab05, HamKab06, HamKab06b}. Once the wedge becomes connected and the relevant algebra type changes, there is a division into complex and simple factors, and no simple reconstruction exists for all operators~\cite{EngWal17b, EngPen21a}: our precise diagnostic of a QES exchange of dominance.

\section{Discussion} \label{sec:diss}

We have proposed a precise algebraic formulation of ER=EPR, including a rigorous definition of the ``quantum wormhole'' heuristically discussed in~\cite{MalSus13}. Our proposal associates spacetime connectivity with the structure of boundary operator algebras in the large-$N$ limit. 
This significantly illuminates the spacetime structure of an evaporating black hole, including the sense in which the interior of an old black hole is connected to the radiation through a quantum wormhole. 
We further argued that the exchange in the algebra of high complexity operators between a system and its complement is generically responsible for exchanges of dominance between QESs.

Here we first discuss some apparent counterexamples to our proposal and then end with some future perspectives.

\subsubsection*{Apparent counterexamples}

\begin{enumerate}
    \item \textbf{End-of-the-world brane:} We can use the setup of AEMM for a single-sided black hole by incorporating an (unflavored) end of the world brane (EOW) as in~\cite{KouMal17}. For this construction to work, it must be the case that even though the brane cuts off part of the spacetime, it does not incur a type III$_{1}$ algebra on a Cauchy slice as a result. While this has not been rigorously demonstrated, previous discussions based on a number of different arguments (see e.g.~\cite{Kol23}) have suggested that an EOW does not have the effect of introducing an internal spacetime boundary with a resulting type III$_{1}$ algebra but rather maintain the type I structure of the complete Cauchy slice of the spacetime without the brane. 
  
    \item \textbf{Traversable wormhole~\cite{GaoJaf16, MalQi18}}: these constructions involve \textit{causal} contact between two different asymptotic boundaries rather than spatial connectivity. 
   In this case the algebras on the two boundaries are not independent; in fact, each boundary algebra should already contain the full algebra, which should be type~I.
    \item \textbf{Pure JT gravity:} even though the two boundaries in pure JT gravity (canonically quantized) are connected,
   there is only a single copy  of a type I algebra, i.e. the two boundaries do not have independent algebras.
   Naively this appears to be a contradiction to our proposal of algebraic ER=EPR, since there is no type III$_{1}$ algebra involved at all. However, pure JT gravity has no local bulk degrees of freedom (though it has a single nonlocal degree of freedom)~\cite{HarJaf18}.\footnote{We thank D. Harlow for discussions on this point.} This canonically quantized bulk theory is not holographic, and we do not expect to have a standard notion of subalgebra/subregion duality as a result. Upon addition of matter, the algebra becomes type II at finite $G_{N}$~\cite{PenWit23}, which is outside of our regime of study, though it is generally consistent with our observations. 
\end{enumerate}

\subsubsection*{Future perspectives}

\ben

\item {\bf Distinguishing type III$_{1}$ algebras for quantum and classically connected cases.}
In the evaporating black hole example, $\sA_B$ is type III$_1$ both at $t_1$ and $t_2$ when the black hole and radiation are respectively quantum and classically connected. We can distinguish the two cases using the classical condition on the states $\psi_{B \widetilde B}$ obtained via canonical purification on the black hole subsystem (see Fig.~\ref{fig:cp} of Appendix~\ref{sec:grpu}). At $t_1$, the state does not satisfy the classical condition while at $t_2$ it does. It would be desirable to understand whether there is an intrinsic description in terms of properties of $\sA_B$ alone, i.e. whether there are some distinguishing features between $\sA_B$ at $t_1$ and $t_2$ that are responsible for the quantum and classical connectivity. 

This is also closely related to the following question:

\item  {\bf A diagnostic of a nontrivial area term in $S_{\rm gen}$.} 

Because the QES is computed using the generalized entropy, which is better defined than either of its individual components (see~\cite{BouFis15a} for a review), it has proven difficult to identify a quantitative measure that can discriminate between a trivial QES and a nontrivial one; though see~\cite{JafKol19, BelCol21} for progress in certain specific examples. Having a precise understanding of when the boundary entanglement is geometrized into a surface and when it is not is a clear step towards  a complete description of the way in which spacetime emerges from the dual theory.

In the evaporating black hole example, at $t_2$, the generalized entropy $S_{\rm gen}$ has an area term, but that at $t_1$ does not. 
With the classical condition imposed on a state,  boundary algebra being type III$_1$ predicts a nontrivial QES and the associated area term. When quantum volatile spacetimes are included, such a diagnostic is still lacking and clearly requires going beyond the algebra types to more refined algebraic properties.

\item  {\bf Mechanism for complexity transfer.}

At the Page time, a high complexity algebra is transferred from the black hole to the reservoir. We have argued this is a generic phenomenon whenever there is an exchange of dominance of QES. It would be ideal to have some microscopic understanding of the transfer mechanism. Here we mention an intriguing possible connection with the discussion of~\cite{LiuVar19,LiuVar20}, where it was argued that in chaotic lattice spin systems the Page turnover and the counterpart of exchange of dominance of QES are realized through a mechanism called void formation. 
More explicitly, under time evolution an operator can develop a void, where its nontrivial parts become separated by a
region of identity operators. For example, in the calculation of the Renyi entropy for the radiation subsystem in a toy model for black hole evaporation,  evolution of operators originally in the black hole subsystem to operators with support only in the radiation subsystem (i.e. with void formation in the black hole subsystem) become dominant at the Page time and lead to the Page 
turnover. It would be interesting to understand more precisely, perhaps in more elaborate models, whether 
void formations could be used to shed light on the complexity transfer.

\item  {\bf Information transfer from quantum wormholes?}

It is generally believed that information transfer from the black hole to the reservoir happens after the Page time, which as we discussed in Sec.~\ref{sec:pady} can also be understood naturally using the algebraic language as a consequence of classical connectivity between the black hole and the reservoir entanglement wedges. Since before the Page time the black hole and the reservoir are connected by a quantum wormhole, it is natural to ask whether such quantum connection can facilitate information ``transfer'' from the black hole to the reservoir. In~\cite{VarKud21a,VarKud21b}, it was found from examining the behavior of logarithmic negativity that  there are significant entanglement correlations within the radiation long before the Page time. Such entanglement correlations are rather subtle, cannot be distilled using LOCC (local operations and classical communications), and the physical origin of such correlations is not clear. Relatedly, the proposed resolution of~\cite{AkeEng22} of the black hole information paradox involves teleportation of the information facilitated by a non-isometric encoding map.
Could the subtle entanglement correlations in the radiation be related to the quantum wormhole connecting it to the black hole? We hope to examine this in the future.

\item  {\bf Inclusion of stringy corrections?}

While our boundary discussion hitherto has only been concerned with the large $N$ limit, the bulk gravity discussion requires in addition that the stringy corrections are suppressed.\footnote{For theories such as the $M$-theory on AdS$_4 \times S_7$ or AdS$_7 \times S_4$, where $G_N$ is the only bulk parameter, such questions do not arise.} 
 It is desirable to understand how stringy corrections affect our proposal, which may shed light on the nature of ``stringy'' geometries. In particular, it is possible that the algebraic definitions that we have used here can be used to ``define'' what it means to be ``classically connected'' (in the sense that there is no quantum gravitational fluctuations) and ``quantum connected'' even in the regime where stringy corrections are large and the usual geometric descriptions do not apply.

\item  {\bf Emergent Type III$_{1}$ Non-Geometric Subalgebras} It is worth emphasizing a novel aspect of our construction illustrated by the subalgebra of complex operators: while in the bulk the algebra of simple operators and the algebra of operators behind the lunch at large-$N$ corresponds to two different bulk regions, in the boundary these subalgebras are non-geometric and emergent only at large-$N$. This unusual feature appears to be a special of holographic theories with a geometric bulk dual.

\item \textbf{Potential Applications Beyond AdS} While our results are concretely formulated in the context of AdS/CFT, it is not unreasonable to expect that some aspects of our approach and perspective can be applied to say, asymptotically flat or dS spacetimes. See for example~\cite{FraPar23a, FraPar23b, Gom23} for recent discussions of ER=EPR in such contexts.

\een

\section*{Acknowledgments}
It is a pleasure to thank C. Gomez, D. Harlow, A. Karch, J. Kudler-Flam, S. Leutheusser, J. Maldacena, D. Marolf, N. Paquette, J. Sorce, A. Speranza, and E. Witten for helpful discussions. The work of NE is in part by the U.S. Department of Energy under Early Career Award DE-SC0021886, by the John Templeton Foundation via the Black Hole Initiative, by the Sloan Foundation, by the Heising-Simons Foundation, and by funds from the MIT physics department. The work of HL is supported by the Office of High Energy Physics of U.S. Department of Energy under grant Contract Number  DE-SC0012567 and DE-SC0020360 (MIT contract \# 578218).

\appendix

 \section{Canonical purification and the Evaporating Black Hole} \label{sec:grpu}

Here we consider the canonical purification of the evaporating black hole example discussed in the Introduction (recall Fig.~\ref{fig:bulkwedges} and Fig.~\ref{fig:count}). Readers should consult Sec.~\ref{sec:cano} first for a review of canonical purification.
We consider the canonical purifications of the black hole subsystem $B$ at times $t_1$ and $t_2$ respectively. As described above,  the canonical purification of $B(t)$ is obtained by  CPT conjugating $\sW_{B(t)} $ around its dominant QES. This introduces a second copy of $B$, denoted $\widetilde{B}$, and results in a pure state 
on $B \widetilde{B}$.
At $t_1$ the dominant QES is $\chi_0=\varnothing$, so Cauchy slices of $\sW_{B(t_1)}$ are inextendible, and thus  the canonical purification of $B (t_1)$ is disconnected, as indicated in the top panel of Fig.~\ref{fig:cp}.   At $t_2$, the canonical purification of $B(t_2)$ is shown in the bottom panel of Fig.~\ref{fig:cp}, where we have a connected geometry. 
Through the canonical purification, now all subsystems have a geometric description, and at $t_1$, we have two disconnected geometries which are entangled at $\sO(1/G_N)$. 
 This provides an explicit demonstration of the puzzle discussed in Sec.~\ref{sec:intro}  in which all subsystems are geometric.

 \begin{figure}[t]
\begin{center}
\includegraphics[width=0.9\textwidth]{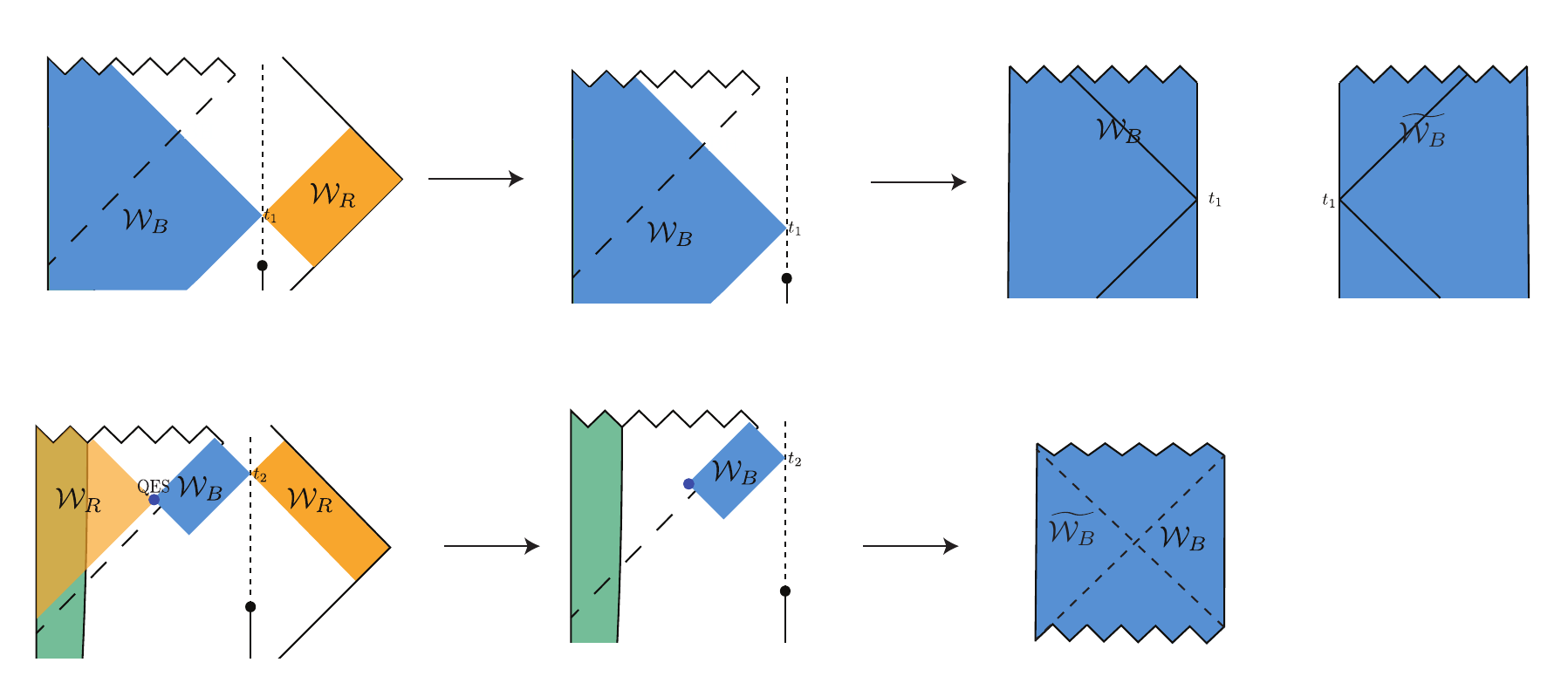} 
\end{center}
\caption{Canonical purification of $B(t)$: we decouple $B(t)$  from the bath, canonically purify, and evolve with the decoupled Hamiltonian of $B(t)\cup \widetilde{B(t)}$. Top: at $t_1$ we obtain disconnected spacetimes for $B$ and $\widetilde B$. Bottom: at $t_2$, we obtain a connected spacetime.
}
 \label{fig:cp}
\end{figure}

\bibliographystyle{jhep}
\bibliography{all}

\end{document}